\documentclass[11pt]{article}

\usepackage[utf8]{inputenc}
\usepackage[normalem]{ulem}

\date{}
\usepackage[left= 3 cm,right=3 cm,top=3 cm,bottom=3 cm]{geometry}
\usepackage{color}
\usepackage{graphicx}
\usepackage{xspace}
\usepackage{url}
\usepackage{amsmath,amssymb,amsthm}
\usepackage{hyperref}
\hypersetup{pdfstartview=XYZ}
\usepackage{capt-of}
\usepackage{enumitem}
\usepackage{array,multirow}

\usepackage{authblk}

\usepackage{tikz}
\usetikzlibrary{positioning}
\usetikzlibrary{patterns}
\usetikzlibrary{decorations}
\tikzset{
bicolor/.style 2 args={
  dashed,dash pattern=on 10pt off 10pt,#1,
  postaction={draw,dashed,dash pattern=on 10pt off 10pt,#2,dash phase=10pt}
  },
}

\newcommand{\commentout}[1]{}

\usepackage[vlined,english,boxed]{algorithm2e} 
      \SetAlgoLined
      \SetKwInput{Input}{Input}
      \SetKwInput{Output}{Output}
      \SetKw{Return}{Return}
      \SetKwIF{If}{ElseIf}{Else}{If}{then}{else if}{else}{endif}
      \SetKwFor{For}{For}{do}{done}
      \SetKwFor{for}{For}{}{}
      \SetKwFor{While}{While}{do}{done}
      \SetKwFor{Repeat}{Repeat}{}{}
      \SetKwRepeat{Do}{Do}{while}
      \SetKwFor{Procedure}{Procedure}{}{}
      \SetKwFor{Function}{Function}{}{}
      \SetKw{Return}{Return}%
      \SetKw{return}{return}%
      \SetKwComment{Comment}{/* }{ */}
      \SetSideCommentRight
      \DontPrintSemicolon
\usepackage{todonotes}

\long\def\jump#1\finjump{}
\long\def\beglongversion#1\endlongversion{#1}

\newtheorem{theorem}{Theorem}
\newtheorem{lemma}{Lemma}
\newtheorem{corollary}{Corollary}
\newtheorem{proposition}{Proposition}
\newtheorem{definition}{Definition}
\newtheorem{property}{Property}

\newcommand{\maxl}{\max_{\vert \mbox{\scriptsize left}}}
\newcommand{\maxr}{\max_{\vert \mbox{\scriptsize right}}}
\newcommand{\maxs}{\max_{\vert \mbox{\scriptsize sofar}}}
\newcommand{\tot}{\mbox{total}}
\newcommand{\tote}{\emph{total}}
\newcommand{\setl}{S_{\mbox{\scriptsize left}}}
\newcommand{\setr}{S_{\mbox{\scriptsize right}}}
\newcommand{\sets}{S_{\mbox{\scriptsize sofar}}}

\newcommand{\conv}{\mbox{conv}}

\DeclareMathOperator*{\argmax}{arg\,max}

\usepackage[framemethod=TikZ]{mdframed}
\mdfdefinestyle{MyFrame}{%
    linecolor=orange,
    outerlinewidth=1pt,
    roundcorner=20pt,
    innertopmargin=\baselineskip,
    innerbottommargin=\baselineskip,
    innerrightmargin=20pt,
    innerleftmargin=20pt,
    backgroundcolor=yellow!10!white
}
\usepackage{amsfonts}

\author[1]{Fariza Aklouche}
\author[2]{Pierre Bergé}
\author[3]{Michel Habib}
  
 \affil[1]{University Mouloud Mammeri of Tizi-Ouzou, Tizi-Ouzou, Algeria}
 \affil[2]{Université Grenoble Alpes, CNRS, Grenoble INP, LIG, 38000 Grenoble, France}
 \affil[3]{IRIF CNRS \& Université Paris Cité, Paris, France}
\title{On the maximum  weight convex problem for  some geometric graph-convexities\footnote{An extended abstract~\cite{cosi25} of this version was presented in conference COSI, Bejaia, Algeria, 2025}}

\begin{document}

\maketitle

\begin{abstract} 
For a given geometric graph-convexity on a graph $G$ equipped with a  weight function on the vertices with value in $\mathbb{Z}$, the \textsc{max weight convex set} problem consists in determining the convex set $S$ with maximum weight (sum of the weight of the vertices in $S$). Although the problem is NP-complete in  general, it remains polynomial for particular cases.

After a survey of known results, our main contribution uses a generalisation of the maximum subsequence problem to laminar trees. Then we derive a linear algorithm for proper interval graphs and a quadratic one for interval graphs. Both improve the state of the art.

\end{abstract}
\section{Introduction to convexity spaces and geometric convexities}

\textbf{Definitions}. A convexity space is an ordered pair $(V, \cal C)$, where $V$ is a non-empty set and $\cal C$ is a collection of $V$-subsets, to be regarded as convex sets, such that:

\begin{enumerate}[label=(\roman*)]
    \item $\emptyset$, $V \in \cal C$.
    \item Arbitrary intersections of convex sets are convex.
    \item Every nested union of convex sets is convex.
 \end{enumerate}

Given a convexity space $(V, \mathcal{C})$, the convex hull of $ X \subseteq V$ denoted by $\conv(X)$
is the smallest convex set containing $ X$, being the intersection of all the convex sets that contain $X$. Families satisfying (iii) are known as inductive systems.
A vertex $x$ of a convex set $S$ is an \emph{extreme point} of $S$ if $S\setminus \{ x \}$ is also a convex set.

Introduced first by Edelman and Jamison in 1985~\cite{edelman1985theory}, convex geometries appeared in the literature under different names and aspects, the most famous one being \emph{antimatroids}.
The convexity $\mathcal{C}$ is a \emph{convex geometry} if it satisfies the \emph{Minkowsky-Krein-Milman} property: every convex set is the convex hull of its extreme vertices.
Equivalently: a convex geometry is a convexity space satisfying the \emph{Anti-Exchange Property}: for any convex set $S$ and two distinct points $x , y \notin S$, $x \in \conv(S \cup \{y\})$ implies
$y \notin  \conv(S \cup \{x\})$. Given a convex geometry $(V, \mathcal{C})$, then the set system
$(V, \mathcal{F})$ where $\mathcal{F} = \{X: X = V\backslash Y, Y \in \mathcal{C})$ is an \emph{antimatroid}. 
We call the sets $X \in \mathcal{F}$ \emph{feasible sets}. In the remainder, we say that a convexity space is \emph{geometric} if it is a convex geometry. 

Following Duchet \cite{Duchet88},
a \emph{graph-convexity space} is an ordered pair $(G, \cal C)$, formed by a connected graph $G=(V,E)$  and a convexity $\cal C$ on $V$ such that $(V, \cal C)$ is a convexity space satisfying
the additional axiom:
\begin{enumerate}[label=(\roman*)]
\setcounter{enumi}{3}
\item Every member of $\cal C$ induces a connected subgraph of $G$.
\end{enumerate}

Antimatroids capture various elimination orderings on graph classes, which are the basis of numerous efficient algorithms. 
There is a close relationship between convexity spaces and the notion of \emph{abstract betweenness} defined by Menger~\cite{menger1928untersuchungen}. 
Betweenness is a ternary relation, that relates the ``placement" of a point $z$ \emph{between} two other points $a$ and $b$. 
We say $z$ belongs to the interval $I[a,b]$ when $z$ is \emph{between} $a$ and $b$. The notion of interval varies with every betweenness. 
As noticed by Chv\'{a}tal~\cite{Chvatal08}, several abstract betweennesses can be defined on graphs. 
Given a betweenness relation, one can deduce a graph-convexity space as follows:
$C \subseteq V(G)$ is convex if $\forall a, b \in C$ and $\forall z \in I[a, b]$, $z \in C$.

As a first example, the set of vertices $A$ is a \emph{geodesic} convex set if $A$ contains all vertices which belong to shortest paths between vertices in $A$.
Geodesic convexity is geometric for a graph $G$ if and only if (iff) $G$  is a Ptolemaic graph, {\em i.e.} chordal and distance hereditary~\cite{Howorka81}.

A well studied antimatroid is the one that rises from \emph{chordal graphs}, {\em i.e.} the graphs where the largest induced cycle is a triangle. 
Chordal graphs are characterized by a vertex ordering known as a \emph{perfect elimination ordering} (PEO). 
An ordering $\sigma = v_1, v_2, \ldots, v_n$ is a PEO for $G$ if for all $i\in[n], v_i$ is \emph{simplicial} in $G[v_1, \ldots, v_{i-1}]$, meaning the neighbourhood of $v_i$ to its left in $\sigma$ induces a clique. Thus $v_n$ is a simplicial vertex in $G$. 
The set system $(V, \mathcal{F})$ whose ground set are the vertices of a chordal graph, and its feasible sets are the suffixes of PEOs form an antimatroid~\cite{korte2012greedoids}. 
Given such an antimatroid, the corresponding convex geometry is the tuple $(V, \mathcal{N})$ where $S \subseteq V$ is convex if all chordless paths between the vertices in $S$ are also in $S$. 
That is, for all $a,c \in S$, if $b$ lies in a chordless $(a,c)$-path in $G$, then $b \in S$. 
This is known as the \emph{monophonic convexity} or simply m-convexity. 
Monophonic convexity yields a convex geometry on a graph $G$ iff $G$ is weak polarizable (every subgraph is chordal or it has a proper homogeneous set)~\cite{DraganNB99}. Naturally this class of graphs contains all chordal graphs.

Another well studied antimatroid is the double shelling antimatroid of posets. 
Let $P(V, \prec)$ be a poset. 
The double shelling antimatroid on $P$ is the set system whose feasible sets are unions of ideals and filters of $P$. 
The corresponding convex geometry is a set system $(V,\mathcal{N})$, where a set $S \in \mathcal{N}$ is convex if for all $a,c \in V$, every $b$ that satisfies $a \prec b \prec c$ or $c \prec b \prec a$ is also in $S$. 
For a more comprehensive survey on this topic, we refer the reader to the monograph \emph{Greedoids} by Korte, Lov{\'a}sz, and Schrader~\cite{korte2012greedoids}, and to~\cite{FarberJ87, Duchet88} for more on graph convexities.

As considered by Chvatal  \cite{Chvatal08}, many other discrete convexities in graphs can be defined via intervals. As an example, let us mention the \emph{$m_3$ convexity} in which
$I[x,y]=\{z ~|~ z \in p$ an induced path of length $\geq 3$ from $x$ to $y\}$.
If $G$ is HDD-free (House, Domino, Diamond), then the $m_3$-convexity is a convex geometry \cite{FarberJ87}.

Furthermore, the \emph{2-paths-convexity}, used for the study of cocomparability graphs ({\em i.e.} complement of comparability graphs) and asteroidal triple free graphs (AT-free graphs for short) is defined for the following notion of interval:
$I[x,y]=\{z |\; \exists p, q$ two chordless paths, such that $p$ from $z$ to $x$ avoids $N[y]$ and $q$ from $z$ to $y$  avoids $N[x]\}$.
If $G$ is a cocomparability graph, the 2-paths convexity is geometric.

Following \cite{KW71}, a convexity space is said to have \emph{Carath\'eodory number} $d$
iff $d$ is the smallest positive integer with the following property:
if a point lies in the convex hull of a set $X$,
then it lies in the convex hull of a subset $X_0 \subseteq X$ such that $\vert X_0\vert \leq d$.

\medskip

\textbf{Problem studied}. In this paper, we mainly deal with discrete convex geometry on which we consider the simple optimization problem proposed below. It has applications to polytopes \cite{abam2022maximum,Doignon16}
and  has already gave rise to very nice combinatorial algorithms. We will consider  some of them here.

\begin{mdframed}[style=MyFrame]
\textbf{Name :} \textsc{max weight convex set}  

\textbf{Data:} ground set $V$, equipped with a weight function
 $\omega $:   $V \rightarrow \mathbb{Z}$  and a discrete convex geometry $\cal C$ on $V$.

\textbf{Result:}  $S$ a ${\cal C}$-convex 
set in $V$ with maximum weight\footnote{
It should be noted that the weight of the empty set is $0$, which could be the optimum if for every vertex $x \in V$, $\omega(x) <0$.}: $\omega(S)=\Sigma_{x \in S} \omega(x)$.
\label{pb:max_convex}
\end{mdframed}

Unfortunately, this problem is NP-hard in general using an easy 
reduction from \textsc{independent set}~\cite{eppstein1992finding}. Therefore it is very interesting to consider graph classes on which this problem can be solved efficiently.
If a graph $G$  admits several connected components
$G_1, \dots G_k$  for every graph convexity $\cal C$, one can consider the problem on each connected component and the value for $G$ is simply the maximum of the computations.
Therefore in what follows all our graphs are supposed to be connected.

The difficulty of the problem comes from the closure operation associated with $\cal C$. Suppose we have a subset $A \subseteq V(G)$  with a very positive weight, to go from $A$ to $\conv(A)$ we could be forced to add vertices with negative weights. Another difficulty comes from the number of convex sets to be considered. In some graph classes, a given graph $G$ may admit  an exponential number of convex sets.

\section{Maximum-weighted convex sets on graphs}

\textbf{Graph notation}. From now on, the graphs considered are finite, simple (no multiple edge), loopless, undirected, and connected. 
For such a graph $G=(V, E)$, we will denote by $n=|V|$ the number of vertices and $m=|E|$ the number of edges.
For a vertex $x \in V$, $N(x)$ denotes its neighbourhood.
The distance between two vertices $u$ and $v$ of $G$ is the length of the shortest $(u,v)$-path and denoted by $d(u,v)$.
A \emph{module} $M \subseteq V$ is a collection of vertices that have the same neighbourhood outside of $M$: for all $x,y \in M: N(x) \setminus M = N(y) \setminus M$. In particular,
$\emptyset$, every singleton $\{x\}$ and the whole vertex set $V$ are called \emph{trivial} modules.
A \emph{prime} graph is a graph that admits only trivial modules.
As a particular case of module we have \emph{twins}. Two vertices $x,y$ are \emph{twins} if $N(x) \setminus \{y\} = N(y) \setminus \{x\}$. 
If, in addition, $xy \in E$, we say $x,y$ are \emph{true twins}, otherwise they are \emph{false twins}. 
We say two vertex sets $X$ and $Y$ \emph{overlap} if and only if (iff) $X\cap Y \neq \emptyset$ , $X\setminus Y \neq \emptyset$ and $Y \setminus X \neq \emptyset$.  

\textbf{Known algorithms}. We focus now on efficient algorithms for some classes of graphs such as split graphs, threshold graphs,  proper interval graphs and interval graphs and improve the known algorithms. The \textsc{max weight convex set} problem has been considered for arrays under the name \textsc{maximum sum subsequence} also called \textsc{maximum subvector problem}. A nice linear time algorithm is proposed in \cite{Bentley84}. In this case the convex sets are just the contiguous subsequences. More algorithms have been proposed for other cases, as listed below.

\begin{itemize}
\item For trees with geodesic convexity, the problem can be solved in $O(n^2)$ using dynamic programming  \cite{eppstein1992finding}, improved to $O(n\log n)$ in \cite{CarlsonE06}.
In this case the geodesic convex sets are just the subtrees. We wonder if a linear time algorithm could be obtained.

\item For partial orders when convex sets are just intervals of the partial order, the problem can easily be solved in quadratic time. But when convex sets are the ideals of the partial orders then one can use a reduction to maximum flow problems \cite{Picard75}. This gives formally:
for a poset $P=(X, \leq_P)$, and a weight function $w : X \rightarrow \mathbb{Z}$
finding an ideal of $P$ that minimizes $w$ can be done in $O(|X|m\log \frac{|X|^2}{m})$ time, where
$m$ is the number of cover relations in $P$.
But several linear time algorithms already exist for computing a maximum flow in graphs \cite{chen2022maximumflowminimumcostflow} and therefore a linear running time can be proposed.

\item Polynomial-time can be achieved for some well-structured graph classes such as split and chordal graphs \cite{MerckxCD16,CardinalDM17,CardinalDM19, Merckx19}.
For chordal graphs they use a data structure, namely the  "Clique-Separator Graph" introduced in \cite{IBARRA2009} and then apply the partial order algorithm. The bottleneck of this approach is given by construction of the data structure in $O(n^3)$. 
With the same remark as above for maximum flow the complexities of their algorithms can be improved.

\item 
On  $\mathbb{R}^d$ equipped with its usual convex geometry, the problem is polynomial for $d\leq 2$  and NP-hard for $d\geq 3$~\cite{abam2022maximum}.
\end{itemize}

Observe that a closely related problem has been intensively studied, namely the \textsc{maximum-weight connected subgraph}, see for example \cite{El-KebirK14}. This generalization is NP-hard and furthermore not related to convexity since the intersection of two connected subgraphs of a given graph is not necessarily connected.

\textbf{Summary}. The following table summarizes both the literature and our results about the  complexity of computing the maximum-weight convex set when $G=(V,E)$ a graph with $|V(G)|=n$ and $|E(G)|= m$. For each class of graphs tackled, we focus on the natural convex geometry associated with its structure (details are given in each section). We also mention the case of trees which stand in our opinion as an important challenge for future research.

\begin{table}[h]
\centering
\begin{tabular}{|c|c|c|}
  \hline
  Graph classes / Discrete structures & Known results & \textbf{Our results} \\
  \hline
   \hline
  Trees   & $O(n\log n)$ \cite{CarlsonE06} & \\
  \hline
 PQ-Trees &  & $ O(n^2)$ Th.~\ref{PQ-tree} \\
 \hline
 \hline
 Chordal graphs & \multirow{4}{*}{$O(n^2m^2\log \frac{n^2}{m})$ \cite{CardinalDM19}} & \\\cline{1-1}\cline{3-3}
  Proper interval graphs & & $ O(n)$ Th.~\ref{properinterval} \\
  \cline{1-1}\cline{3-3}
  Interval graphs &  & $O(n^{2})$ Th.~\ref{intervalgr} \\
  \cline{1-1}\cline{3-3}
  Threshold graphs & & $O(n+m) $ Th.~\ref{Threshold}\\
  
  \hline

\end{tabular}
\caption{\textsc{max weight convex set} on different structures.}
\end{table}


\section{Maximum sequences of leaves in a laminar tree}

Let us now introduce a tool, namely \emph{laminar trees}  which will be very useful for our algorithms on \textsc{max weight convex set} for well-structured classes of graphs.
A natural generalization of the \textsc{maximum sum subsequence} problem in an array is to consider a similar problem on laminar trees. 

\subsection{Laminar families and trees}

As defined in \cite{Schrijver}, a \emph{laminar family} on a ground set $V$ is a subset ${\cal F} \subseteq 2^V$ such that
for all $A, B \in \cal F$, either $A \subseteq B$ or  $B \subseteq A$ or $A\cap B=\emptyset$. Observe a laminar family $\cal F$ on $V$ is naturally represented by a rooted forest, denoted $T_{\cal F}$ and called \emph{${\cal F}$-laminar forest}, such that every set $A\in\cal F$ admits a representative node $u_A$ in $T_{\cal F}$. The leaves of the subtree rooted at $u_A$ provides us with the content of set $A \in \cal F$. Note that if $X\in\cal F$, then $T_{\cal F}$ is a rooted tree. More generally, when $T_{\cal F}$ is connected, we say that it is a \emph{laminar tree}.

Given a planar embedding of a laminar tree $T_{\cal F}$ with leaves at the bottom, {\em i.e.} a \emph{representation}, the left-right reading of the leaves of $T_{\cal F}$ gives a total ordering of $X$.
In order to manipulate sets of representations of a laminar tree in a compact way, let us first introduce these simple types on the internal nodes of \emph{typed-laminar trees}.

\begin{itemize}
    \item P-node, for parallel. Any ordering of the children of a P-node is valid.
    \item Q-node. Only one ordering of the children and the reverse one are valid.
    \item R-node for rigid (uniquely ordered). A unique ordering of the children of a R-node is valid. 
    \item C-node for circular. Given a total order $\tau$ of the children of a C-node, any cyclic ordering of $\tau$ is valid.

\end{itemize}

A typed-laminar tree provides a succint encoding of a set of planar representations of the associated tree, in fact a set of permutations of its leaves, {\em i.e.} set $X$.
A set of leaves $S$ of a typed-laminar tree $T$ is said convex if the element of $S$ are contiguous in some valid  representation of $T$. 
Now we consider a weight function on $X$ and ask the maximal convex set in $T$.

\begin{mdframed}[style=MyFrame]
\textbf{Name :} \textsc{max weight laminar convex set}

\textbf{Data:} A typed-laminar tree $T$ on ground set $X$, equipped with a weight function on its leaves $\omega : X \rightarrow \mathbb{Z}$.

\textbf{Result:}  $S$ a sequence of contiguous leaves in a valid planar representation of $T$ with maximum weight.
\end{mdframed}

Observe that if all nodes of the typed-laminar tree are typed R, then we are back to the \textsc{maximum sum subsequence} problem.
Our generalization contains the well-known PQ-trees  used for interval graphs \cite{BoothL76} and also the PC-trees  used for planarity and circular-arc graphs recognition~\cite{shih1999new}. 
Let us consider the following first case as an appetizer.

\begin{proposition}\label{one-level}
There exists a linear time algorithm solving \textsc{max weight laminar convex set} on typed-laminar trees with height 1.    
\end{proposition}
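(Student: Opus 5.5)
A typed-laminar tree of height $1$ consists of a single internal node $r$, the root, whose children are exactly the leaves $x_1,\dots,x_n$ of $X$. The set of valid representations is determined entirely by the type of $r$. So I will split into the four cases P, Q, R and C. In each case I first characterise the convex sets of leaves, and then give an $O(n)$ maximisation over that family. Throughout I keep in mind that the empty set is allowed and has weight $0$.

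\textbf{Case R and case Q.} If $r$ is an R-node, there is a unique ordering $x_1,\dots,x_n$, and the convex sets are exactly its contiguous intervals. This is precisely the \textsc{maximum sum subsequence} problem, which Bentley's linear-time scan \cite{Bentley84} solves. It keeps the best suffix sum ending at position $i$, namely $\max(0,\text{best}_{i-1}+\omega(x_i))$, and records the maximum seen so far. If $r$ is a Q-node, the only valid orderings are $\tau$ and its reverse. A set is contiguous in the reverse of $\tau$ if and only if it is contiguous in $\tau$, so the convex sets are the same as in case R and the same scan applies.

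\textbf{Case P.} If $r$ is a P-node, every permutation of the leaves is valid, so every subset of $X$ is contiguous in some valid representation. Hence every subset is convex. The optimum is then $S=\{x\in X : \omega(x)>0\}$, computed in one pass, and it is empty when no weight is positive.

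\textbf{Case C.} This is the only case needing a little care, and I expect it to be the main (mild) obstacle. With $r$ a C-node and base order $x_1,\dots,x_n$, the convex sets are the circular arcs, so a convex set is either a linear interval or a wrap-around set $X\setminus I$ where $I$ is a linear interval. First, the best linear interval is found by the Bentley scan. Second, the best wrap-around set has weight $\omega(X)-\min_I \omega(I)$, and the minimum-weight interval is found by the same scan applied to $-\omega$. One subtlety must be checked: taking $I=\emptyset$ gives the whole of $X$, which is convex, so this choice is legitimate. Likewise, taking $I=X$ gives $\emptyset$, which is also legitimate. The answer is the larger of the two candidates. Each candidate takes $O(n)$ time, so the whole algorithm runs in linear time, and the proposition follows.
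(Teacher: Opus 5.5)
Your proof is correct. For P-, Q- and R-nodes it matches the paper's argument: the sum of the positive weights for a P-node, and Bentley's scan for R- and Q-nodes.

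For the C-node you take a different route.
\begin{itemize}
\item The paper doubles the sequence into $A^+=a_1,\dots,a_n,a_1,\dots,a_n$. It then runs one Kadane-type sweep meant to maintain the best window of length at most $n$ ending at each position, and reads the rotation off the extremities of the best window.
\item You instead characterise the arcs as the linear intervals together with the complements $X\setminus I$ of linear intervals. This reduces the problem to two unconstrained scans, giving $\max\{\max_I\omega(I),\ \omega(X)-\min_I\omega(I)\}$.
\end{itemize}

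The advantage of your reduction is that its correctness is immediate, and you checked the degenerate cases $I=\emptyset$ and $I=X$. The length-bounded sweep is more delicate. The paper's update subtracts only $A^+[i-n]$ once the window has size $n$, and this does not maintain the claimed invariant. After the oldest element is removed, the window can begin with a negative-weight prefix that should also be removed. For example, with weights $2,-1,5$ the optimum is the arc $\{x_3,x_1\}$ of weight $7$. The sweep, however, carries windows of weight $6$ from $i=3$ onwards and returns $6$.

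A correct one-sweep version takes, for each $i$, the prefix sum at $i$ minus the minimum prefix sum over positions $i-n,\dots,i$. That minimum can be maintained with a monotone deque, so the sweep stays linear. Its only gain over your approach is that it produces the optimal window, and hence the rotation, directly. You recover the rotation just as easily from the endpoints of the minimising interval.
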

\begin{proof}
 If the type of the unique internal node is P for parallel, then the solution is just the sum of the elements with positive weight.

For R and Q-nodes, we are back to the \textsc{maximum sum subsequence} problem already mentioned. Whence we can use a nice one-sweep linear time algorithm.

For a C-node (circular) we generalize the linear  time algorithm for \textsc{maximum sum subsequence} as follows. Let us denote the elements of the circular node by $a_1,\dots    a_n$. The idea is to double these elements in an array $A^+$ of size $2n$ containing $a_1, \dots a_n, a_1, \dots a_n$. While exploring this array, we maintain two variables $\sets(A^+)$ and $\setr(A^+)$ which contain respectively the maximum-weighted contiguous subset of size $\le n$ the explored area and the maximum-weighted contiguous subset of size $\le n$ finishing at the current element. In particular, at each iteration, $\sets(A^+)$ consists in selecting the best option between $\setr(A^+)$ and the set $\sets(A^+)$ computed at the last iteration. Variables $\maxs(A^+),\maxr(A^+)$ simply store respectively the cardinality of sets $\sets(A^+),\setr(A^+)$.

\begin{algorithm}[h!]
\KwIn{an array $A^+$ with $2n$ elements}
\KwOut{Values $\maxs(A^+),\maxr(A^+)$ and corresponding convex sets $\setr(A^+),\sets(A^+)$}

\BlankLine
\nl $\maxr, \maxs \leftarrow 0$; $\setr, \sets \leftarrow \emptyset$;

\nl \For{$i=1$ to $2n$}
{
\nl \eIf{$|\setr(A^+)| =n$}{$\maxr \leftarrow \max \{\maxr +A^+[i] -A^+[i-n], 0\}$} 
{$\maxr \leftarrow \max \{\maxr +A^+[i], 0\}$}

\nl update $\setr$; \\
\nl $\maxs \leftarrow  \max \{\maxs, \maxr\}$; \\
\nl update $\sets$;
}
\caption{For one internal circular node} \label{circular}
\end{algorithm}
 
After the execution of this Algorithm \ref{circular}, the set $\sets(A^+)$ contains a subsequence of maximum weight among all subsequences of size $\leq n$.
With both extremities of $\sets$ we can adjust the initial data to find the best circular permutation.
The time complexity of this dynamic programmic procedure depends linearly on the size of array $A^+$.
\end{proof}

 \subsection{PQ-trees}
 
PQ-trees are well-known typed-laminar trees  with only two types of nodes P and Q. A PQ-tree is a data structure introduced by Booth and Lueker \cite{BoothL76}. PQ-trees can be used to represent the permutations of a set $U$ in which various subsets of $U$ occur consecutively. Efficient algorithms using PQ-trees
are given in \cite{BoothL76} for recognizing interval graphs and testing graph planarity. They are used for representing many combinatorial families as evoked in \cite{Crespelle07}.

Let $\omega : \Gamma(T)\longrightarrow \mathbb{Z}$ be a weight function associating an integer with each leaf of some PQ-tree $T$. The objective is to identify the PQ-convex set $S$ which maximizes $\displaystyle{\omega(S)=\sum_{u\in S}\omega(u)}$.

\begin{mdframed}[style=MyFrame]
\textbf{Name :} \textsc{Max weight PQ-convex set}

\textbf{Data:} PQ-tree $T$ with $n$ leaves $\Gamma(T)$, a weight function $\omega : \Gamma(T)\longrightarrow \mathbb{Z}$.

\textbf{Result:}  The PQ-convex set $S$ which maximizes $\omega(S)$.
\end{mdframed}

Let $\Gamma(T)=\{u_{1},u_{2},...,u_{n}\}$ be the set of leaves of $T$, its inner nodes are either P-nodes or Q-nodes. 
The frontier $F(T)$ from the PQ-tree is the permutation of $\Gamma(T)$ which is given by the ordering of the leaves of $T$ from left to right. A PQ-tree $T'$ is equivalent to $T$ ($T \sim T'$) if we can transform the frontier of $T'$ and obtain $F(T)$ by applying the following rules : (i) permute arbitrarily the children of a $P$-node and (ii) reverse or not the ordering of the  children of a $Q$-node. A PQ-convex set $S \subseteq \Gamma (T)$ is a set of leaves which can be defined with an equivalent PQ-tree $T'\sim T$   and two leaves  $u_{i}$, $u_{j}$ : $S$ contains $u_{i}$ and $u_{j}$ and all leaves that are between $u_{i}$ and $u_{j}$ in permutation $F(T')$. Moreover, any singleton of $\Gamma (T)$ is also considered as a PQ-convex.

We propose a natural dynamic programming scheme to handle this problem in quadratic time.

\begin{theorem}\label{PQ-tree}
The maximum-weight PQ-convex set can be computed in $O(n^2)$ on a PQ-tree with $n$ vertices.
\end{theorem}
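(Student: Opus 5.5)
We compute, bottom-up over $T$, a constant number of quantities for each node $u$ with leaf set $L(u)$, and then combine them at the parent. The observation behind this is structural. Let $S$ be a PQ-convex set with $|S|\ge 2$, and let $v$ be the lowest node whose subtree contains $S$. Then in the frontier of a suitable $T'\sim T$ the set $S$ is an interval, and its intersection with the leaves of each child $c$ of $v$ takes one of three forms:
\begin{itemize}
\item it is empty;
\item it is all of $L(c)$;
\item it is a \emph{prefix} or a \emph{suffix} of the frontier of $c$, and this can happen only for the (at most two) extreme children that $S$ touches.
\end{itemize}
A prefix or suffix of $c$'s frontier, in some equivalent tree, is in turn a set of the same kind one level down. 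So for every node $u$ we maintain:
\begin{itemize}
\item $W(u)=\omega(L(u))$;
\item $\mathrm{Best}(u)$, the maximum weight of a PQ-convex set inside $L(u)$;
\item $\mathrm{Pre}(u)$, the maximum weight of a nonempty set that forms a prefix of the frontier of some $T'_u\sim T_u$. By reversal and symmetry, suffixes give the same value.
\end{itemize}

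\textbf{Recurrences at a Q-node} with children $c_1,\dots,c_k$ in their fixed order. A prefix of $u$ either is $L(c_1)\cup\dots\cup L(c_{j-1})$ plus a prefix of $c_j$, or is obtained from the reversed order. This gives
\[
\mathrm{Pre}(u)=\max_{j}\max\Big(\sum_{i<j}W(c_i)+\mathrm{Pre}(c_j),\ \sum_{i>j}W(c_i)+\mathrm{Pre}(c_j)\Big).
\]
For $\mathrm{Best}(u)$ we need, for some $i<j$, a suffix of $c_i$, then all of $L(c_{i+1}),\dots,L(c_{j-1})$, then a prefix of $c_j$; otherwise the set lies inside a single child and is covered by $\mathrm{Best}(c_i)$. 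Here we use that a suffix of $c_i$ and a prefix of $c_j$ can be chosen independently, since different subtrees are rearranged independently. This is a maximum-subsequence computation on the array of children, solved by a Kadane-type sweep in $O(k)$. It can also be done naively in $O(k^2)$, which still fits the overall bound.

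\textbf{Recurrences at a P-node.} The children can be ordered freely, so:
\begin{itemize}
\item $\mathrm{Pre}(u)$ is $\mathrm{Pre}(c_j)$ for some chosen child $c_j$, plus the sum of the positive $W(c_i)$ over $i\ne j$.
\item $\mathrm{Best}(u)$ is the maximum of $\max_i\mathrm{Best}(c_i)$ and $\max_{i\ne j}\big(\mathrm{Pre}(c_i)+\mathrm{Pre}(c_j)+\sum_{l\ne i,j}\max(W(c_l),0)\big)$.
\end{itemize}
The last term requires $i\neq j$ and is evaluated by trying all pairs, in $O(k^2)$. Alternatively, it takes $O(k)$ if we keep the top two values of $\mathrm{Pre}(c_i)-\max(W(c_i),0)$. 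There is one subtle point: one must check that choosing full children with positive weight in between (or at the ends) is legitimate, and that a full child $c$ may equally be counted through $\mathrm{Pre}(c)=W(c)$. Since $\mathrm{Pre}$ allows taking the whole of $L(c)$, this is consistent. At a leaf we set $W=\mathrm{Pre}=\mathrm{Best}=\omega(\text{leaf})$. The answer is $\max(\mathrm{Best}(\text{root}),0)$, where $0$ corresponds to the empty set; we store back-pointers to recover $S$.

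\textbf{Correctness and running time.} For correctness we prove two directions by induction on height.
\begin{itemize}
\item \emph{Soundness:} every value produced is the weight of an actual prefix, or of a PQ-convex set, in some equivalent tree. To see this, build the equivalent tree by composing the child rearrangements.
\item \emph{Completeness:} every prefix or convex set decomposes as described, via the lowest-common-node observation.
\end{itemize}
I expect completeness to be the main obstacle. One has to argue carefully that a contiguous interval of the frontier meets each child subtree in exactly one of the three forms listed above, and that at a Q-node only the two orientations are available. The running time is $\sum_u O(\deg(u)^2)\le O(n^2)$, and the number of internal nodes of a PQ-tree is $O(n)$ because every internal node has at least two children. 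This yields the claimed $O(n^2)$ bound.
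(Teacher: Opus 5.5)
Your proposal is correct and follows essentially the same route as the paper: a bottom-up dynamic program storing the total weight, the best prefix/suffix weight and the best convex set at each node, with the same case split (inside a single child, or a suffix of one child, full children in between, and a prefix of another child) at P- and Q-nodes, at $O(\deg(u)^2)$ cost per node. Your merging of prefix and suffix maxima by reversing an equivalent tree, and your remark that each node can in fact be processed in $O(\deg(u))$ time, are valid refinements of the paper's argument rather than a different method.
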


\begin{proof}
Using PQ-trees we propose an algorithm solving the problem in $O(n^{2})$. We proceed inductively. We initialize four variables for each vertex of $T$, that we update while we go up in the tree. Concretely, each variable is a function over PQ-trees and each variable for $v$ must be understood as the value of this function applied on the subtree rooted in $v$. Given some vertex $v$, by induction hypothesis, we will assume the variables have been computed for all subtrees rooted in the children of $v$ and then show how to compute these variables for $v$. Variables for leaves can be computed trivially.\\

\textbf{Variables.} First, we define formally the four variables for some PQ-tree $T$.
\begin{itemize}
    \item $\maxl(T)$ : maximum weight of a PQ-convex $S$ which is a prefix of some permutation $F(T')$, $T'\sim T$. The associated PQ-convex set is denoted by $\setl(T)$.
    \item $\maxr(T)$ : maximum weight of a PQ-convex $S$ which is a suffix of some permutation $F(T')$, $T'\sim T$. The associated PQ-convex set is denoted by $\setr(T)$.
    \item $\maxs(T)$ : maximum weight of a PQ-convex $S$ of $T$. The associated PQ-convex set is denoted by $\sets(T)$.
    \item $\tot(T)$ : the sum of the weight of all leaves in $T$. Formally, $ \tot(T)=\sum\limits_{u\in \Gamma(T)}\omega(u)$.
\end{itemize}

In brief, set $\sets(T)$ corresponds to the result we are looking for, if $T$ is the input PQ-tree.\\

\textbf{Base case :} Suppose the tree $T$ is made up of a single node $u$. If $\omega(u) \ge 0$, we fix all variables with value $\omega(u)$ and also $\setl(T)=\setr(T)=\sets(T)=\{u\}$. 

Otherwise, we fix $\tot(T) = \omega(u)$ but all other variables equal to $0$. Furthermore, $\setl(T)=\setr(T)=\sets(T)=\emptyset$.\\

\textbf{Induction step :} We distinguish two cases: the current root $v$ is either a $P$-node or a $Q$-node.

$\bullet$ \textit{$P$-node.} We suppose that the root of $T$ is a $P$-node and the variables of the subtrees below this root have been computed. Let us denote these PQ-trees by $A_{1},...,A_{k}$. Algorithm \ref{P_interval} shows the procedure to determine
the variables of $T$ when the root is a  $P$-node. Below we give both a description and a justification of this algorithm.

The computation of total$(T)$ is trivial: $\tot(T)$=$\sum_{A_{i}} \tot(A_{i})$. The same formulation holds if the root is a $Q$-node.

For the computation of $\maxl(T)$, we distinguish two cases. First suppose that $\setl(T)$ only contains elements of sets $A_{i}$ satisfying $\tot(A_{i})>0$. In this case, $\setl(T)$, which is the maximum-weighted convex set which must appear on the left of the permutation, is necessarily made up of the concatenation of all sets $A_{j}$ with $\tot(A_{j})> 0$ with a last one $\setl(A_{i})$. Indeed, the positivity of all $\tot(A_{j})$ makes our weight increase and $\setl(A_{i})$ is a maximum-weighted convex suffix. The set  $A_{i}$ which must be selected is the maximum over $\maxl(A_{i}) - \tot(A_{i})$, otherwise it could be replaced by the maximum-weighted convex suffix of another set. In brief, we select the set $A_i$ maximizing $\maxl(A_{i}) - \tot(A_{i})$ such that $\tot(A_i) > 0$, and we have:
\begin{equation}\label{eq1}
    \textstyle \maxl(T)=\maxl(A_{i})+\sum\limits_{j\neq i , {\scriptsize \tot(A_{j})}> 0} \tot(A_{j})
\end{equation}
\indent Second, we consider the case where there can be some elements of $A_{i}$ in $\maxl(T)$ such that $\tot(A_{i})\leq 0$. There cannot be two sets $A_{i}$ like that, otherwise one of them is included entirely in $\setl(T)$ and this is a contradiction as it makes the total weight decrease. So only one of these sets exists and it will be the rightmost one among the convex set: we denote it by $A_{i}$. 
All sets $A_{j}$ with $\tot(A_{j}) > 0$ have to be added to the PQ-convex to maximize its weight. In this case, $\setl(T)$ is the union of the maximum-weighted $\setl(A_i)$, where $\tot(A_{i})\leq 0$, with all sets $A_{j}$ satisfying  $\tot(A_{j}) > 0$.\\
\begin{equation}\label{eq2}
  \textstyle \maxl(T)= \maxl(A_{i})+\sum _{{\scriptsize \tot( A_{j})}> 0}  \tot(A_{j})
\end{equation}

Eventually, it suffices to compare the values obtained in both cases (Equations~\eqref{eq1} and~\eqref{eq2}) and take the maximum. The execution time for this procedure is linear in $k$. This is described in line~\ref{line:compute_maxlr} of Algorithm \ref{P_interval}.
Note that the computation of $\maxr(T)$ is symmetric and can be deduced from the rules presented for $\maxl(T)$.

\begin{figure}[h]
\centering
\scalebox{0.9}{\begin{tikzpicture}

\coordinate (p) at (4,8) {};
\coordinate (s1) at (0,6) {};
\coordinate (s2) at (1.5,6) {};
\coordinate (s3) at (3,6) {};
\coordinate (s4) at (4.5,6) {};
\coordinate (s5) at (6,6) {};
\coordinate (s6) at (7.5,6) {};
\coordinate (s7) at (9,6) {};
\coordinate (s8) at (11,6) {};

\coordinate (s11) at (-1,3) {};
\coordinate (s12) at (1,3) {};
\coordinate (s31) at (1.5,3) {};
\coordinate (s32) at (3.3,3) {};
\coordinate (s41) at (3.5,3) {};
\coordinate (s42) at (5.1,3) {};
\coordinate (s51) at (5.3,3) {};
\coordinate (s52) at (6.9,3) {};
\coordinate (s61) at (7.1,3) {};
\coordinate (s62) at (8.9,3) {};
\coordinate (s81) at (10.2,3) {};
\coordinate (s82) at (12.2,3) {};


\draw[rounded corners, color = black, line width = 0.8pt] (s1) -- (s11) -- (s12) -- (s1);
\draw[rounded corners, color = black, line width = 0.8pt] (s3) -- (s31) -- (s32) -- (s3);
\draw[rounded corners, color = black, line width = 0.8pt] (s4) -- (s41) -- (s42) -- (s4);
\draw[rounded corners, color = black, line width = 0.8pt] (s5) -- (s51) -- (s52) -- (s5);
\draw[rounded corners, color = black, line width = 0.8pt] (s6) -- (s61) -- (s62) -- (s6);
\draw[rounded corners, color = black, line width = 0.8pt] (s8) -- (s81) -- (s82) -- (s8);


\node[draw, circle, minimum height=0.2cm, minimum width=0.2cm, fill=black] at (p) {};
\draw (p) node[above,scale=1.25,yshift = 2mm] {P-node};
\node[scale = 2] at (s2) {$\ldots$};
\node[draw, circle, minimum height=0.2cm, minimum width=0.2cm, fill=black] at (s1) {};
\node[draw, circle, minimum height=0.2cm, minimum width=0.2cm, fill=black] at (s3) {};
\node[draw, circle, minimum height=0.2cm, minimum width=0.2cm, fill=black] at (s4) {};
\node[draw, circle, minimum height=0.2cm, minimum width=0.2cm, fill=black] at (s5) {};
\node[draw, circle, minimum height=0.2cm, minimum width=0.2cm, fill=black] at (s6) {};
\node[scale = 2] at (s7) {$\ldots$};
\node[draw, circle, minimum height=0.2cm, minimum width=0.2cm, fill=black] at (s8) {};

\foreach \I in {1,...,2} {
    \coordinate (v\I) at (1.4+0.4*\I,3.2) {};
    \node[draw, circle, fill=black, scale = 0.5] at (v\I) {};
}
\foreach \I in {3,...,4} {
    \coordinate (v\I) at (1.4+0.4*\I,3.2) {};
    \node[draw, circle, fill=red, scale = 0.5] at (v\I) {};
}
\foreach \I in {1,...,3} {
    \coordinate (w\I) at (3.5+0.4*\I,3.2) {};
    \node[draw, circle, fill=green, scale = 0.5] at (w\I) {};
}
\foreach \I in {1,...,4} {
    \coordinate (x\I) at (5.2+0.35*\I,3.2) {};
    \node[draw, circle, fill=green, scale = 0.5] at (x\I) {};
}
\foreach \I in {1,...,3} {
    \coordinate (y\I) at (7+0.4*\I,3.2) {};
    \node[draw, circle, fill=blue, scale = 0.5] at (y\I) {};
}
\coordinate (y4) at (8.6,3.2) {};
\node[draw, circle, fill=black, scale = 0.5] at (y4) {};


\node[scale=1.2] at (2.8,5) {$A_{r}$};
\node[scale=1.2] at (7.7,5) {$A_{\ell}$};

\node[scale=1.2, color = red] at (2.8,2.5) {$\setr(A_{r})$};
\node[scale=1.2, color = blue] at (7.8,2.5) {$\setl(A_{\ell})$};

\draw[green, line width = 1.5pt, <->] (3.9,2.5) to node[below,text width=30mm, text centered]{all subtrees with positive $\tot$} (6.6,2.5);


 \draw[line width = 1.4pt] (p) -- (s1);
\foreach \I in {3,...,6} {
     \draw[line width = 1.4pt] (p) -- (s\I);
}
 \draw[line width = 1.4pt] (p) -- (s8);




\end{tikzpicture}}
\caption{Possible structure of a convex set $\sets(T)$ (colored leaves) below a P-node}
\label{fig:Pnode}
\end{figure}

For the computation of  $\maxs(T)$, we distinguish two cases. First, suppose that the elements of $\sets(T)$ come from a single set $A_{s}$ then,
\begin{equation}\label{eq3}
   \textstyle \maxs(T)=\max\limits_{A_{s}} \maxs(A_{s}) 
\end{equation}

\begin{algorithm}[h!]
\KwIn{A PQ-tree with root $P$ and subtrees $A_{1},...,A_{k}$ }
\KwOut{Values $\maxl(T),\maxr(T),\maxs(T),\tot(T)$ and corresponding convex sets $\setl(T),\setr(T),\sets(T)$}

\BlankLine
\nl  $x_{P}^{-}\;\leftarrow \;\max\limits_{\tot( A_{i})\leq 0}\maxl(A_{i});\;A_{\ell^{-}}\leftarrow$ corresponding set;
\nl  $y_{P}^{-}\;\leftarrow \;\max\limits_{\tot( A_{i})\leq 0}\maxr(A_{i});\;A_{r^{-}}\leftarrow$ corresponding set; 
\nl  $x_{P}^{+}\;\leftarrow \;\max\limits_{\tot( A_{i})> 0}(\maxl(A_{i})-\tot ( A_{i}));\;A_{\ell^{+}}\leftarrow$ corresponding set;
\nl  $y_{P}^{+}\;\leftarrow \;\max\limits_{\tot ( A_{i})> 0}(\maxr(A_{i})-\tot ( A_{i}));\;A_{r^{+}}\leftarrow$ corresponding set;
\nl $z_{P}\;\leftarrow \;\max\limits_{A_{i}}\maxs(A_{i});\;A_{s}\leftarrow$ corresponding set;

\nl  $\tot^{+}(T)\;\leftarrow \;\sum\limits_{\tot ( A_{i})> 0}\tot ( A_{i})$; 

 \# Computation of  $\tot( T)$ \# 

\nl $\tot(T)\;\leftarrow \;\sum\limits_{A_{i}}\tot ( A_{i})$;

\# Computation of  $\maxl(T)$ and $\maxr(T)$ \#

\nl $x_{P}\;\leftarrow \;\max \{x_{P}^{-},x_{P}^{+}\};\;A_{\ell}\leftarrow $ corresponding set;

\nl $y_{P}\;\leftarrow \;\max \{y_{P}^{-},y_{P}^{+}\};\;A_{r}\leftarrow $ corresponding set;
\nl $\maxl(T)\;\leftarrow x_{P} + \tot^{+}(T);\maxr (T)\;\leftarrow y_{P} + \tot^{+}(T) $; \label{line:compute_maxlr}

\nl \eIf{$\tote(A_{\ell})\leq 0$} 
{
\nl $\setl(T) \;\leftarrow \setl
           (A_{\ell})\cup \{A_{j}\;: \tot(A_{j})> 0\}$;
  }{
\nl $\setl(T) \;\leftarrow \setl
            (A_{\ell})\cup \{A_{j}\;:\;j\neq \ell, \tot(A_{j})> 0\}$; \label{line:left_neg}
}
\nl \eIf{$\tote(A_{r})\leq 0$} 
{
\nl $\setr(T) \;\leftarrow \setr(A_{r})\cup \{A_{j}\;: \tot(A_{j})> 0\}$;
}{
\nl $\setr(T) \;\leftarrow \setr(A_{r})\cup \{A_{j}\;:\;j\neq r, \tot(A_{j})> 0\}$;
}

 \#  Computation of $\maxs(T)$ \# 

\nl $(A_{L},A_{R})\;\leftarrow \;\argmax\limits_{A_{i},A_{j},i \neq j} ((\maxl(A_{i})-\max\{\tot(A_{i}),0\}),(\maxr(A_{j})- \max \{\tot (A_{j}),0\}))$; \label{line:enum_pair}

\nl $\hat{x}_{P}\;\leftarrow \;\maxl(A_{L})$ ; $\hat{y}_{P}\;\leftarrow \;\maxr(A_{R})$ ; 

\nl  \eIf{  $z_{P}< \hat{x}_{P}+\hat{y}_{P}+\tote^{+}(T)$ }
{
\nl $\maxs(T)\;\leftarrow \; \hat{x}_{P}+ \hat{y}_{P}+\tot^{+}(T)$ ; \label{line:compute_maxs1}
\nl $\sets(T) \;\leftarrow \setl(A_{L})\cup \setr(A_{R})\cup \{A_{j}\;:\;j\neq \;L, R, \tot(A_{j})> 0\}$;
}{
\nl $\maxs(T)\;\leftarrow \;z_{P}$; $\sets(T)\;\leftarrow \; \sets(A_{s})$ \label{line:compute_maxs2}
}
\caption{Computation of the variables when the root is a $P$-node} \label{P_interval}
\end{algorithm}

Second, suppose that the elements of $\sets(T)$ come from at least two sets. Using similar arguments as $\maxl(T)$, we can show that the PQ-convex is the union of: a set $\setl(A_{\ell})$ which is on the right of the PQ-convex set (in other words, a suffix), a set $\setr(A_{r})$, $r \neq \ell$ which is on the left of the PQ-convex set (a prefix), and all other sets $A_{j}$, $j \neq \ell, r$ which satisfies $\tot( A_{j})> 0$. This scenario is depicted in Figure~\ref{fig:Pnode}.
\begin{equation}\label{eq4}
   \textstyle \maxs(T)=\maxl(A_{\ell})+\maxr(A_{r})+\sum\limits_{j\neq \ell,r ~;~ {\scriptsize \tot( A_{j})}> 0} \tot(A_{j})
\end{equation}
\indent In this case the idea is to find the pair ($A_{\ell},A_{r})$,$\ell\neq r$ which maximizes the weight of the PQ-convex. We can achieve this operation in time $O(k^2)$:
\begin{itemize}
    \item[-] compute the collection of sets $A_i$ which verify $\tot(A_i) \ge 0$ and determine the sum $\tau$ of their total weights (linear time) 
    \item[-] enumerate all possible pairs ($A_{\ell},A_{r})$,$1\le \ell\neq r \le k$ (quadratic time) and compute the weight of the convex set obtained for this pair suffix-prefix. Deduce the value of the middle sets by reducing from $\tau$, if necessary, the weights of $A_{\ell}$ and $A_r$ (if their total is positive).
\end{itemize}

Eventually we compare the maximum values obtained in both cases (Equations~\eqref{eq3} and~\eqref{eq4}) and take the overall maximum. This is described in lines \ref{line:enum_pair}-\ref{line:compute_maxs2} of Algorithm \ref{P_interval}. This process can be achieved in quadratic time.\\

$\bullet$ \textit{$Q$-node.} We suppose now that the root of $T$ is a $Q$-node and the variables of the subtrees below this root have been computed.

For the computation of $\maxl(T)$, we distinguish two cases. Indeed set $\setl(T)$ appears left when the ordering is either $A_{1},...,A_{k}$ or $A_{k},...,A_{1}$. First, we begin with the increasing order  $A_{1},...,A_{k}$. Let $A_{i}$ the rightmost set that contains elements of $\setl(T)$. Then we have,  
\begin{equation}\label{eq5}
  \textstyle \maxl(T)= \sum\limits_{j=1}^{i-1} \tot(A_{j}) + \maxl(A_{i})
\end{equation}

With this ordering, value $\maxl(T)$ is thus obtained by listing all $A_{i}$  and determining the one  which maximizes the right-hand side of Equation~\eqref{eq5}. Second with the decreasing order $A_{k},...,A_{1}$, set $\setl(T)$ is the union of set $\setl(A_{i})$ with all sets $A_{j}$, where $j>i$.

\begin{equation}\label{eq6}
  \textstyle \maxl(T)= \maxl(A_{i})+\sum\limits_{j=i+1}^{k} \tot(A_{j})
\end{equation}
\indent Eventually, we compute the maximum of the two values obtained for each case and obtain $\maxl(T)$. This operation takes $O(k)$ since we need to explore all sets $A_i$ and update the total weight of all other members of the convex set increasingly. Computing $\maxr(T)$ is symmetric.

For the computation of  $\maxs(T)$, we distinguish three cases. First, if the elements of  $\sets(T)$ are all contained in some set $A_{i}$, then  $\maxs(T)$= $\maxs(A_{i})$. Second, suppose that $\sets(T)$ contains elements of at least two sets and that this  PQ-convex appears when the subtrees are in ascending order $A_{1},...,A_{k}$. Let $A_{\ell}$ the leftmost subtree having a nonempty intersection with  $\sets(T)$, $A_{r}$ the rightmost. Value $\maxs(T)$ can be thus written : 
 \begin{equation}\label{eq7}
  \textstyle \maxs(T)= \maxr(A_{\ell})+ \maxl(A_{r})+\sum\limits_{j=\ell+1}^{r-1} \tot(A_{j})
\end{equation}
Observe that this operation can be naturally achieved in time $O(k^2)$, by enumerating all possible prefix $A_r$, then by increasing parameter $j$ one by one, and updating the total of middle sets while adding to it value $\maxl(A_j)$. At the end, we return the maximum value obtained for the weight of a convex set. Note that it could certainly be improved to a linear running time by using dynamic programming, but since quadratic time was already needed for a previous case, we keep this algorithmic version.

Finally, the third case is when $\sets(T)$ contains elements of several sets and appears with the subtrees in the decreasing order $A_{k},...,A_{1}$. Let $A_{r}$ the leftmost subtree sharing elements with $\sets(T)$, $A_{\ell}$ the rightmost.

 \begin{equation}\label{eq8}
  \textstyle \maxs(T)= \maxl(A_{\ell})+ \maxr(A_{r})+\sum\limits_{j=\ell+1}^{r-1} \tot(A_{j})
\end{equation}

The processing time to achieve this computation is identical to the one for the ascending order. In summary, to obtain  $\maxs(T)$, we determine the maximum of the three cases. The overall running time is thus $O(k^2)$.

In summary, the time needed to determine the four variables for each subtree $T$ rooted in $v$ is $O(k^2)$, where $k$ is the number of children of $v$. In summary, the time needed for the whole recursion process is thus $O(n^2)$. Since vertices with one child do not provide any particular information, we can restrict ourselves to irreducible trees where the number of leaves is linear in $n$.
\end{proof}

There is no doubt, in our opinion, that such a dynamic programming algorithm can be adapted to other types of nodes, as C and R. However, since the case of PQ-trees is the most crucial for applications, we do not focus on such technical details for orher kinds of typed-laminar trees.

%
%
%
%
%

\section{Interval graphs}
 A graph $G=(V,E)$ is an \emph{interval graph} iff it models a set of intervals on the real line such that two vertices are adjacent iff the corresponding intervals have a non empty intersection. Hajos~\cite{Hajos57} was the first one to mention these graphs in the literature. Interval graphs are related to problems in biology, psychology and traffic light sequencing.
For interval graphs, the natural graph-convexity space to consider is the monophonic convexity since they are chordal.
If an interval representation of the graph is given, a simple scan with a sweep line allows to compute the maximum-weight convex set which is an interval of this representation. In fact the problem is reduced to the \textsc{maximum sum subsequence} problem already discussed. Otherwise, we need to deal with PQ-trees a structure that allows to  generate all interval representations.
It is well-known that for every interval graph $G$, one can compute a PQ-tree $T_G$ whose leaves are in bijection with the maximal cliques of $G$ with only two types of internal nodes $P, Q$. Hence, we are going to transform our graph problem into a problem on PQ-trees.

Using results in \cite{Jesse}, we will prove that a given interval convexity is a convex geometry.

\begin{definition}[Interval convexity space~\cite{Jesse}] Given an interval space $(V, I)$, the interval convexity induced by $I$
on $V$ is defined as follows: a subset $ C \subseteq V$ is interval convex if $I[a, b] \subseteq C$ for all
$a, b \in C$. An interval space $(V,I)$ is called geometric interval if the operator $I$ is geometric.
\end{definition}

Chv\'{a}tal proposed an interesting property on interval spaces $(V, I)$.

\begin{property}[Strong Chv\'{a}tal Property]
For all $a, b, c \in V$ and $y \in I[b, c]$ and $z \in I[a, y]$, it
holds that $z \in I[a, b]$ or $z \in I[a, c]$.
\end{property}

The Chv\'{a}tal Property is very useful in the study of interval convexities.

\begin{definition}[Interception convexity]
Let $G = (V,E)$ be a graph and $z, a, b \in V$. We say that $z$ is in the
\emph{interception interval} of $a$ and $b$, denoted as $z \in I_{int}[a, b]$ iff there exists vertices
$u, v\in N(z)$ such that $u$ and $v$ are in different connected components of $G-(N[z]\backslash \{u, v\})$,
where $a$ (resp. $b$) is in the same component as $u$ (resp. $v$). We call $u$ the \emph{witness} of $a$ and $v$ the witness of $b$ with regard to $z \in  I_{int} [a,  b]$. The interval convexity
$(V, C_{int})$ induced by this operator is called the \emph{interception convexity}.
\end{definition}

\begin{lemma}[\cite{Jesse}] Let $G = (V,E)$ be a graph and let $(V, C_{int})$ be its interception convexity.
If $z \in I_{int}[a, b]$, then $z$ intercepts the interior of any path between $a$ and $b$.
\end{lemma}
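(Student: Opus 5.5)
The plan is to prove the contrapositive. Suppose some path $P = a = p_0, p_1, \dots, p_k = b$ has an interior that avoids $z$, i.e.\ $z \notin \{p_1,\dots,p_{k-1}\}$. We will build from $P$ a walk from $u$ to $v$ inside $G-(N[z]\setminus\{u,v\})$. That contradicts the fact that $u$ and $v$ lie in different components of this graph. Here $u,v \in N(z)$ are the witnesses of $a$ and $b$ for $z \in I_{int}[a,b]$.

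Write $H = G-(N[z]\setminus\{u,v\})$. First I will check that $z \notin \{a,b\}$ may be assumed. If $z=a$, then $a$ is not a vertex of $H$, so it cannot be in the component of $u$, which contradicts the definition. The same holds for $b$. So $z\ne a,b$, and $z$ does not appear on $P$ at all.

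Next I look at the vertices of $P$ that lie in $N(z)$; these are exactly the vertices of $P$ missing from $H$, apart from $u$ and $v$. If $P$ contains no vertex of $N(z)\setminus\{u,v\}$, then $P$ is a path in $H$ joining $a$ to $b$. Since $a$ is in the component of $u$ and $b$ in the component of $v$, this makes $u$ and $v$ connected in $H$, a contradiction. Otherwise, let $p_i$ be the first vertex of $P$ in $N(z)$ and $p_j$ the last. The segment $p_0,\dots,p_{i-1}$ lies in $H$, so $p_{i-1}$ is in the component $C_u$ of $u$ (when $i\ge 1$; $i=0$ would put $a$ in $N(z)$, handled below). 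Likewise $p_{j+1}$ is in the component $C_v$ of $v$.

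The key step, and the main obstacle, is that the definition only forbids $u,v$ from being connected in $H$, so the leftover vertices of $N(z)$ met by $P$ must be handled carefully. The remedy I intend is to show that $p_i$ itself can serve as a witness. Because $p_{i-1}\in C_u$ and $p_i$ is adjacent to $p_{i-1}$, the vertex $p_i$ lies in $C_u\cup N(z)$. Similarly $p_j$ lies in $C_v\cup N(z)$. I then argue that the separation property transfers: $u$ and $v$ are separated in $H$ only if no vertex of $N(z)$ that is removed has neighbours in both $C_u$ and $C_v$. I expect this to be the delicate point, and it may require reading the definition as "witnesses can be chosen among the neighbours of $z$ met by any $a$--$b$ path". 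If the literal definition does not give this, I would fall back on the intended meaning from \cite{Jesse}: $N[z]\setminus\{u,v\}$ separates $a$ from $b$. In that case every $a$--$b$ path meets $N[z]\setminus\{u,v\}$, and I would conclude that every $a$--$b$ path passes "through" $z$'s neighbourhood, which is what interception means.

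Finally, the degenerate cases where $a$ or $b$ lies in $N(z)$, forcing $a=u$ or $b=v$, are handled by the same argument, starting the segment at $u$ itself.
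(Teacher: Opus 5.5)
Your main argument reads ``$z$ intercepts the interior of $P$'' as ``$z$ is an interior vertex of $P$''. Under that reading the statement is false, so the contrapositive you set up cannot be completed. Take vertices $a,u,z,v,b,w$ with edges $au,uz,zv,vb,uw,wv,zw$. With witnesses $u,v$ we get $G-(N[z]\setminus\{u,v\})=G-\{z,w\}$, whose components are $\{a,u\}$ and $\{v,b\}$, so $z\in I_{int}[a,b]$. Yet the path $a,u,w,v,b$ avoids $z$.

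This is exactly where your ``key step'' fails. The transfer principle you propose says separation holds only if no removed vertex of $N(z)$ has neighbours in both $C_u$ and $C_v$. It is false: here $w$ is removed and is adjacent to $u\in C_u$ and $v\in C_v$. Removed vertices are precisely the ones allowed to bridge the two components, and re-choosing witnesses among the $p_i$ cannot change that.

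The paper gives no proof of this lemma; it only quotes it from Beisegel's thesis. In the interception/AT-free terminology, $z$ intercepts a path when the path contains a vertex of $N[z]$. The closed neighbourhood is necessary: in the example above $z\in I_{int}[u,v]$ as well, and the path $u,z,v$ has interior $\{z\}$, which misses $N(z)$.

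With this reading, what you call a fallback is not an appeal to an ``intended meaning''. It is the literal definition, and it is the whole proof:
\begin{itemize}
\item Put $H=G-(N[z]\setminus\{u,v\})$. Since $a$ and $b$ lie in components of $H$, they are vertices of $H$. Hence $a,b\notin N[z]\setminus\{u,v\}$; this is the same observation you used to get $z\neq a,b$.
\item An $a$--$b$ path avoiding $N[z]\setminus\{u,v\}$ would lie in $H$ and join the distinct components $C_u\ni a$ and $C_v\ni b$, which is impossible.
\item So every $a$--$b$ path contains a vertex of $N[z]\setminus\{u,v\}\subseteq N[z]$. That vertex is neither $a$ nor $b$, i.e.\ it is interior.
\end{itemize}
Your fallback never makes this last ``interior'' step explicit, and it should be stated. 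The walk construction should be dropped altogether, since it aims at a false claim.
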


The interception convexity implies a linear vertex ordering of a graph.

\begin{definition}[Interception convexity order] Let $G = (V,E)$  be a connected graph and let $\sigma = (v_{1}, . . . , v_{n})$ be a
linear order of its vertices. We say that $\sigma$ is an \emph{interception convexity order} if $(v_{1}, . . . , v_{i})$ is an interception convex set for every $i \in \{1, . . . ,n\}$.
\end{definition}

\begin{theorem}[\cite{Jesse}]
\label{convex geometry} For any graph $G = (V,E)$ the following properties are equivalent:
\begin{enumerate}
\item The interception interval operator $I_{int}$ of $G$ fulfils the Strong Chv\'{a}tal Property;
\item The interception convexity of $G$ is a convex geometry;
\item $G$ possesses a interception-convexity order.
\end{enumerate}
\end{theorem}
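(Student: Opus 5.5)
The plan is to prove the cycle of implications $1 \Rightarrow 2 \Rightarrow 3 \Rightarrow 1$, using only the definition of $I_{int}$ and the lemma above: if $z \in I_{int}[a,b]$, then $z$ intercepts the interior of every $(a,b)$-path.

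\textbf{Step $1 \Rightarrow 2$.} I would first show that the Strong Chv\'atal Property makes hulls computable in one round of the interval operator. More precisely, for convex $S$ and $x \notin S$, I would show
\[
\conv(S\cup\{x\})=\bigcup_{s\in S} I_{int}[x,s] \cup S.
\]
To see that the right-hand side is convex, take two points of it and a point $z$ in their interval. Applying the property twice (once with apex $x$, once with apex a point of $S$) places $z$ in some $I_{int}[x,s]$ or in $S$.

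Anti-exchange then comes from a betweenness asymmetry. Suppose $y \in I_{int}[x,s]$ and $x \in I_{int}[y,s']$ with $x \neq y$. Every $(x,s)$-path then meets $N[y]$ in its interior, and every $(y,s')$-path meets $N[x]$. Because $S$ is convex it is connected, so there is a path from $s$ to $s'$ inside $S$. Concatenating a path from $x$ through the witness of $x$ with the component structure of $G-(N[y]\setminus\{u,v\})$, I would derive that $x$ and $S$ lie in different components after removing $N[y]$ minus the witnesses. This should contradict the requirement that $y$'s own witness towards $s'$ lies in the component containing $s'$. This is where the lemma is used.

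\textbf{Step $2 \Rightarrow 3$.} This part is standard. In a convex geometry every nonempty convex set, in particular $V$, has an extreme point. Removing extreme points one at a time from $V$ keeps a chain of convex sets, and reading the removals in reverse gives $\sigma=(v_1,\dots,v_n)$ whose prefixes are all convex. That is exactly an interception-convexity order.

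\textbf{Step $3 \Rightarrow 1$.} This is the main obstacle, because we must recover a pointwise property of $I_{int}$ from the existence of a single ordering. I would argue by contradiction. Fix $a,b,c$ with $y\in I_{int}[b,c]$ and $z\in I_{int}[a,y]$, and suppose $z \notin I_{int}[a,b]\cup I_{int}[a,c]$. Then, after deleting $N[z]$ minus any candidate witnesses, $a$ lies in the same component as both $b$ and $c$. Using the interception of $y$, one of these $a$--$b$ or $a$--$c$ connections can be rerouted through $y$, and this would make $a$ and $y$ connected avoiding $z$'s separator, a contradiction.

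If this direct rerouting fails because of witness subtleties, the fallback is the following. Look at the first of the vertices $a,b,c,y,z$ to appear in $\sigma$ and use convexity of that prefix. For example, a prefix containing $a$ and $b$ but not $z$ must avoid $I_{int}[a,b]$. From there, build the forbidden configuration that violates prefix convexity.

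I expect the witness bookkeeping in both $1\Rightarrow2$ and $3\Rightarrow1$ to be the delicate part. Since the theorem is cited from \cite{Jesse}, I would follow their component-based arguments closely at these points.
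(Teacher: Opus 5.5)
The paper gives no proof of this statement; it is quoted from \cite{Jesse}. Your proposal therefore has to stand on its own. Your $2\Rightarrow 3$ step is fine, but the other two steps have genuine gaps.

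\textbf{The anti-exchange step in $1\Rightarrow 2$.} It rests on ``$S$ is convex, hence connected'', which is false for the interception convexity. In any cycle the two neighbours of $z$ stay connected in $G-z$, so every $I_{int}[a,b]$ is empty. In $C_4$, two opposite vertices therefore form a convex but disconnected set. The component chasing is also unnecessary. Since $z\in N[z]$ is always deleted, an interception interval never contains its endpoints; in particular $y\notin I_{int}[s',y]$. So the Strong Chv\'atal Property, applied to the triple $s',y,s$ with middle point $x$ (using $x\in I_{int}[y,s]$ and $y\in I_{int}[s',x]$), gives $y\in I_{int}[s',s]\subseteq S$, a contradiction. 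For the same reason your hull formula must include $x$: $\conv(S\cup\{x\})=S\cup\{x\}\cup\bigcup_{s\in S}I_{int}[x,s]$.

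\textbf{The step $3\Rightarrow 1$.} This is the crux, and it is not actually argued.
\begin{itemize}
\item The negation of $z\in I_{int}[a,b]$ is not ``$a$ and $b$ lie in one component after deleting $N[z]$ minus candidate witnesses'', because the separator $N[z]\setminus\{u,v\}$ depends on the chosen pair.
\item The rerouting through $y$ is never specified.
\item The fallback via $\sigma$ is empty: a single order with convex prefixes constrains nothing about an arbitrary quintuple $a,b,c,y,z$.
\end{itemize}
Your main argument never uses $\sigma$, so you are really trying to prove the Strong Chv\'atal Property for every graph. That is in fact true, so all three conditions always hold, but it needs a precise argument.

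For $x\neq z$, let $W(x)=\{x\}$ if $x\in N(z)$, and otherwise let $W(x)$ be the neighbourhood of the component of $G-N[z]$ containing $x$. Then $z\in I_{int}[a,b]$ iff some $u\in W(a)$ and $w\in W(b)$ are distinct, non-adjacent, and have no common neighbouring component of $G-N[z]$.

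Aim the contradiction at $y\in I_{int}[b,c]$ (with witnesses $p,q$), not at $z\in I_{int}[a,y]$. Let $u\in W(a)$ be the witness of $a$ for $z\in I_{int}[a,y]$.
\begin{itemize}
\item If $y\notin N[z]$, then either $b$ or $c$ lies in the component of $y$ in $G-N[z]$, and the same witnesses work. Otherwise both $b$ and $c$ reach $z$ in $G-(N[y]\setminus\{p,q\})$, so they are joined there, a contradiction.
\item If $y\in N(z)$, then $y$ is itself the witness opposite $u$.
  \begin{itemize}
  \item If $z=p$ (symmetrically $z=q$), the path from $c$ to $q$ in $G-(N[y]\setminus\{p,q\})$ must avoid $N[z]$. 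Hence $y\in W(c)$, and $z\in I_{int}[a,c]$ with witnesses $u,y$.
  \item If $z\notin\{p,q\}$ and $z\notin I_{int}[a,b]\cup I_{int}[a,c]$, then $b$ and $c$ are joined in $G-(N[y]\setminus\{p,q\})$. The connection runs through surviving exits $w_b\in W(b)$ and $w_c\in W(c)$, the vertex $u$, and the components of $G-N[z]$ linking $u$ to $w_b$ and $w_c$. The vertex $u$ and these components avoid $N[y]$, because $u$ is non-adjacent to $y$ and shares no component with it. The components of $b$ and $c$ also avoid $N[y]$, since otherwise $y\in W(b)$ or $y\in W(c)$.
  \end{itemize}
\end{itemize}
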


This leads to the following proposition.

\begin{proposition}\label{Interval-interval}
Every convex set of an interval graph $G$ can be associated to an interval in some interval representation of $G$.
\end{proposition}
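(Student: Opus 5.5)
We prove the statement for the monophonic convexity, the convexity the paper uses for interval graphs. Let $C$ be a nonempty m-convex set of an interval graph $G$. The goal is an interval representation of $G$ in which the vertices of $C$ are exactly those whose intervals meet some segment $[s,t]$ of the line. Equivalently, we want an ordering of the maximal cliques, compatible with the PQ-tree $T_G$, in which the cliques meeting $C$ form a contiguous block. The argument has three steps.

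\textbf{Step 1: $C$ is an interval graph.} Since $C$ is convex, it induces a connected subgraph of $G$ (axiom (iv)). As an induced subgraph of an interval graph, $G[C]$ is itself an interval graph.

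\textbf{Step 2: each component of $G-C$ attaches to a clique of $C$.} Let $K$ be a connected component of $G-C$, and let $a,b\in C$ both have neighbours in $K$. Then $a$ and $b$ are adjacent. Otherwise, choose a shortest path from $a$ to $b$ whose interior lies in $K$. This path is chordless: its interior is a shortest path, $a$ has no later neighbour on it, and neither does $b$ (shorten the path otherwise). It contains a vertex outside $C$, contradicting m-convexity. Hence $N(K)\cap C$ is a clique of $G[C]$. By the Helly property for intervals, the intervals of $N(K)\cap C$ share a common point. So $N(K)\cap C$ is contained in some maximal clique of $G[C]$.

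\textbf{Step 3: build the representation.} Fix an interval representation of $G[C]$ on a segment $[s,t]$, and treat the components $K$ of $G-C$ one by one. If $N(K)\cap C$ contains no interval with endpoint $s$ or $t$, we cannot simply place $K$ outside $[s,t]$ in this representation. The plan is instead to use the separator structure. Since $G$ is chordal, $N(K)\cap C$ is a clique separator of $G$. In the clique tree, which is a path of maximal cliques for an interval graph, $K$ hangs off $C$ at a single attachment point. So $K$ together with $N(K)$ lies on a branch of the clique path that can be moved by reordering the PQ-tree. Concretely, take the clique path of $G$. The maximal cliques meeting $C$ are the cliques containing a vertex of $C$. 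We argue that any clique strictly between two such cliques also meets $C$. If $Q$ lies between cliques $Q_1\ni a$ and $Q_2\ni b$ with $Q\cap C=\emptyset$, then the vertices of $Q\setminus(Q_1\cup Q_2)$ lie on a chordless $a$–$b$ path: walk along the representation from $a$ to $b$ and take an induced subpath. This contradicts convexity unless $a$ and $b$ are adjacent, and in that case both belong to a common clique, which can be brought next to them. Therefore, after permuting the components of $G-C$ hanging at a common separator (the P-node freedom), the cliques meeting $C$ are consecutive. We then choose $[s,t]$ to span exactly those cliques, and trim so that the vertices of $G-C$ stay outside $[s,t]$. This is possible because no vertex outside $C$ lies in a clique wholly inside the block without touching $C$.

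\textbf{Main obstacle.} The delicate part is the reordering argument for a clique $Q$ between $Q_1$ and $Q_2$ when $a$ and $b$ are adjacent. In that case, the components attached at $N(K)\cap C$ must be swung outside the block using the P/Q freedom of $T_G$. I would handle it by induction on the number of components of $G-C$, each time moving one component to an end of its attaching clique. The equivalence in Theorem~\ref{convex geometry} (interception orders) can serve as a fallback: it gives an elimination ordering in which the vertices of $V\setminus C$ are removed from the ends of the line.
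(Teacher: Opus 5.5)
\textbf{There is a genuine gap: the statement you set out to prove is false as you formalized it, so Step~3 cannot be completed.} Take the interval graph on $x',x,y,y',c$ with edges $x'x$, $xy$, $yy'$, $cx$, $cy$, and the m-convex set $C=\{c\}$. In any interval representation, $I_{x'}$ meets $I_x$ but not $I_y$, and $I_{y'}$ meets $I_y$ but not $I_x$. Up to reflection, $I_{x'}$ lies to the left of $I_y$, and a short check then forces $I_{y'}$ to lie to the right of $I_x$. Pick $p\in I_{x'}\cap I_x$ and $q\in I_{y'}\cap I_y$. Then $p<\min I_y$ and $q>\max I_x$, so $p<q$ and $[p,q]\subseteq I_x\cup I_y$. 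Now $I_c$ meets $I_y$, which lies entirely to the right of $p$, and it meets $I_x$, which lies entirely to the left of $q$. Since $I_c$ contains neither $p$ nor $q$, we get $I_c\subseteq(p,q)\subseteq I_x\cup I_y$. Hence every segment meeting $I_c$ also meets $I_x$ or $I_y$.

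This example also kills your plan of swinging components away. The maximal cliques are $\{x',x\},\{x,y,c\},\{y,y'\}$, and the PQ-tree is a single Q-node. The unique component $K=\{x',x,y,y'\}$ of $G-C$ occupies cliques on \emph{both} sides of the only clique meeting $C$, in every ordering. So $K$ does not ``hang off'' $C$ at one end, and no P/Q freedom can move it.

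Your ``equivalent'' reformulation, that the cliques meeting $C$ form a contiguous block, is not equivalent to your target, and it is automatic. Each vertex occupies consecutive cliques, and adjacent vertices share a clique. So for any $C$ inducing a connected subgraph, the cliques meeting $C$ are consecutive in \emph{every} clique ordering, with no convexity and no reordering needed. All the content is therefore pushed into separating $C$ from $N(C)$, and there Step~2 (that $N(K)\cap C$ is a clique) is not enough. Your justification for trimming is also vacuous: every clique of the block meets $C$ by definition.

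The paper takes a completely different route and never builds a representation. It shows that for any $A$, the set $A\cup A'$ is already convex, where $A'$ is the union of the interception intervals $I_{int}[a,t]$ over $a,t\in A$. The argument is a case analysis using the Strong Chv\'atal Property from Theorem~\ref{convex geometry}, so convex hulls are generated by pairs.

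If you want a representation-based statement, you need another formalization. What Theorem~\ref{intervalgr} actually uses is contiguity of $C$ among the leaves, i.e.\ the vertices, which $\{c\}$ satisfies trivially. Even then, in the same graph $\{x,c\}$, $\{c,y\}$ and $\{x,y\}$ are all m-convex but cannot all be contiguous in one linear order. So the representation would have to depend on $C$.
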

\begin{proof} Let $G=(V,E)$ be an interval graph, $A$ a subset of $V$ and $A'=\{z : \exists a,t \in A ~\mbox{with}~ z \in I_{int}[a,t]\}$. As $\conv(A)$ is a superset of $A$, it is a superset of $A \cup A'$: let us prove that $A \cup A'$ is a convex set. Consider an arbitrary $z \in V$ such that  $z \in I_{int}[a,t]$ for some $a,t \in A \cup A'$, we are going to prove that $z \in A \cup A'$.

\textbf{Case 1: } $a, t \in A $, in this case $z \in A' $, by definition of $A'$.

\textbf{Case 2: } $a \in A, t \in A'$. By definition of $A'$ there are $b,c \in A$ with $t \in I_{int}[b,c]$, if we refer to Theorem~\ref{convex geometry}, by Chv\'atal property with $t=y$, we have $z \in I_{int}[a,b]$ or $z \in I_{int}[a,c]$ then $z \in A'$.

\textbf{Case 3: } $a, t \in A'$. There are $b,c \in A$, $z \in I_{int}[a,b]$ or $z \in I_{int} [a,c]$ then we are back in case 2.
\end{proof}

In other words the above Proposition \ref{Interval-interval} yields that the m-convexity has Carathéodory number 2 for interval graphs.

\begin{lemma} \label{modules}
Let $M$ be a  module of $G$  and $S$ be a geodesic convex set (resp. m-convex, m3-convex) set. If for two non adjacent vertices $a,b \notin M$, there exists $x \in M$ that belongs to $I[a,b]$, then every vertex of $M$ belongs to $I[a,b]$.
\end{lemma}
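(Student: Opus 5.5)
The plan is to prove the lemma separately for each of the three interval operators, geodesic, monophonic and $m_3$, using one common device. The key property of a module $M$ is this: if $P$ is a path that passes through a vertex $x\in M$ and has both endpoints $a,b\notin M$, then replacing $x$ by any $x'\in M$ yields again a walk of the same length. The reason is that the neighbours of $x$ on $P$ that lie outside $M$ are also neighbours of $x'$. So the argument is to take a witness path for $x\in I[a,b]$, swap $x$ for an arbitrary $x'\in M$, and check that the resulting path still witnesses $x'\in I[a,b]$ for the relevant operator.

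The first step is to normalise the witness path so that $x$ is the \emph{only} vertex of $M$ on it. Let $P$ be a shortest, resp.\ chordless, resp.\ chordless of length $\ge 3$, $(a,b)$-path through $x$.
\begin{itemize}
\item If $P$ contains two vertices of $M$ and some vertex of $P$ lies outside $M$, we use the fact that $a\notin M$. Take the first vertex $m$ of $P$ in $M$ and its predecessor $p\notin M$. Since $M$ is a module, $p$ is adjacent to every vertex of $M$ on $P$. If the path is chordless, this forces $P\cap M=\{m\}$. If it is a shortest path, it forces the vertices of $M$ on $P$ to be consecutive.
\item In the shortest-path case, a second vertex $m'\in M$ on $P$ must be the successor of $m$. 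The successor of $m'$, if it lies outside $M$, is adjacent to $m$, giving a shortcut and contradicting minimality. The remaining possibility is that all further vertices are in $M$, which is impossible because $b\notin M$.
\end{itemize}
Hence in all three cases $P\cap M=\{x\}$, and $x$ is an interior vertex of $P$ with neighbours $p,q\notin M$ on $P$.

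Now fix $x'\in M$ and let $P'$ be $P$ with $x$ replaced by $x'$. Then $p x'$ and $x' q$ are edges, so $P'$ is an $(a,b)$-path of the same length. This settles the geodesic case. For the chordless cases we must rule out chords at $x'$. Any vertex $w\notin M$ of $P'$ satisfies $w x'\in E$ if and only if $w x\in E$, so $x'$ has exactly the same adjacencies to the rest of $P'$ as $x$ had to the rest of $P$. Therefore $P'$ is chordless, and its length is unchanged, which covers the $m_3$ condition. So $x'\in I[a,b]$ in all three cases.

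The main obstacle is the normalisation step, which needs the endpoints $a,b$ outside $M$. One should also double-check the role of the hypothesis that $a,b$ are non-adjacent. It guarantees that $I[a,b]$ contains interior vertices at all, since adjacent endpoints give a trivial interval. Otherwise it is not used, except that it prevents $p=q$ issues. The convex set $S$ in the statement plays no role in the argument. As a corollary, if $S$ is convex and $a,b\in S$, then $M\subseteq S$ as soon as $M\cap I[a,b]\neq\emptyset$.
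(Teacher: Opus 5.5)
Your proof is correct and follows essentially the same route as the paper. In both, a witness path through $x$ is shown to meet $M$ only in $x$, and then any other vertex of $M$ is substituted for $x$ using the module property, which preserves both length and chordlessness. You add detail for the geodesic case, which the paper dismisses as ``similar''. The normalisation there is simpler than your two bullets suggest: the predecessor $p$ of the first $M$-vertex is adjacent to every other $M$-vertex on $P$. Each such vertex lies at least two steps after $p$, so a shortcut arises unless $P\cap M=\{x\}$.
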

\begin{proof}
Let us consider the m-convexity case and suppose there exists a chordless path in $G$:
$\mu=[a, \dots u,x,v\dots b]$ with $u,v \notin M$ and $x \in M$.
First we notice that $\mu \cap M =\{x\}$, else $\mu$ would not be chordless.
Using the module definition for every $z \in M$ there exists a chordless path
in $G$:
$\mu_z=[a, \dots u,z,v\dots b]$ and therefore $z \in I[a,b]$. The proof is similar for the two other convexities.
\end{proof}

\begin{corollary}\label{reductionlemma}
For a module $M$ in $G$, in order to compute the intervals $I[a, b]$ with $a,b \notin M$, in geodesic convexity space (resp. m-convexity, m$_3$-convexity), it suffices to compute them in a graph $G'$ obtained by  contracting $M$ into a single vertex $\cal M$ with weight equal to the sum of the weights of the vertices of $M$.
\end{corollary}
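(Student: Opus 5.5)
The plan is to derive the corollary directly from Lemma~\ref{modules}, together with a check that contracting $M$ changes no interval between vertices outside $M$ except by replacing $M$ with its representative $\cal M$. Let $G'$ be obtained from $G$ by contracting $M$ into $\cal M$, where $\cal M$ is adjacent to exactly the common outside neighbourhood $N(x)\setminus M$ (for any $x\in M$). Fix $a,b\notin M$. The goal is to show that $I_G[a,b]\setminus M = I_{G'}[a,b]\setminus\{{\cal M}\}$, and that $M\subseteq I_G[a,b]$ iff ${\cal M}\in I_{G'}[a,b]$, for each of the three convexities.

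First I treat the m-convexity, then indicate the changes for the geodesic and m$_3$ cases. If $a,b$ are adjacent, then $[a,b]$ is the only chordless path, and the same holds in $G'$, so both intervals equal $\{a,b\}$. Now suppose $a,b$ are non-adjacent. Let $\mu$ be a chordless $(a,b)$-path in $G$.

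As in the proof of Lemma~\ref{modules}, $|\mu\cap M|\le 1$. Indeed, two vertices of $M$ on $\mu$ would have the same outside neighbours on $\mu$, and since $a,b\notin M$ this creates a chord or a repeated neighbour.

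Hence I can map $\mu$ to $G'$ by replacing its unique vertex of $M$ (if any) by $\cal M$. Chordlessness is preserved because adjacencies between $\cal M$ and outside vertices are exactly those of any $x\in M$. Conversely, a chordless path of $G'$ through $\cal M$ lifts to a chordless path of $G$ through any chosen $z\in M$. This bijection-up-to-choice of $z$ gives both claimed equalities, and Lemma~\ref{modules} supplies that either all of $M$ or none of it lies in $I_G[a,b]$.

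For geodesic convexity the same lift works with shortest paths. Distances between outside vertices are unchanged, because a shortest path meets $M$ in at most one vertex, or otherwise could be shortcut through a single module vertex. For m$_3$ the length condition is preserved because path lengths are unchanged by the bijection.

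The delicate step is the claim $|\mu\cap M|\le1$ in full generality, together with distance preservation in the geodesic case. In particular, I must handle a path whose internal vertices include two consecutive vertices of $M$. Such a path has $x,x'\in M$ adjacent, so the neighbour $u\notin M$ of $x$ on $\mu$ is also adjacent to $x'$, giving a chord (or, for shortest paths, a strictly shorter path). The endpoint case is excluded since $a,b\notin M$. Finally, the weight of $\cal M$ is set to $\omega(M)$, because convex sets containing any vertex of $M$ in an interval must contain all of $M$, so the weights add correctly.
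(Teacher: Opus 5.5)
Your proof is correct and follows the paper's route. The paper states the corollary as an immediate consequence of Lemma~\ref{modules}, whose proof rests on your two facts: a chordless (resp.\ shortest, or induced of length $\geq 3$) path between vertices outside $M$ meets $M$ in at most one vertex, and that vertex can be replaced by any other vertex of $M$ because outside neighbourhoods coincide. Your explicit path bijection with the quotient $G'$, including the check $d_G(a,b)=d_{G'}(a,b)$, spells out what the paper leaves implicit.

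Soften your closing remark on weights. The weight of ${\cal M}$ plays no role in the interval statement, and it is justified only for convex sets that contain two vertices $a,b\notin M$ with $I[a,b]\cap M\neq\emptyset$. It is not justified for arbitrary convex sets, which may contain only part of $M$.
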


Any legitimate plane drawing of the PQ-tree $T$ of an interval graph $G$ yields an ordering of the maximal cliques in which the cliques which contain a given vertex are consecutive.
Legitimate here means that for a $Q$-node only two dual orders of its children are allowed, and for a $P$-node every order of its children are possible.
Such a total ordering of the maximal cliques
$C_1, \dots, C_k$ yields an interval representation as follows: to each vertex of $G$ we associate the unique interval of the cliques that contain it.
So $T$ is a short description  of all interval representations of $G$.
 
\begin{theorem}\label{properinterval}
The maximum convex set problem can be solved in linear time for proper interval graphs.
\end{theorem}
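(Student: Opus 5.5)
Proper interval graphs have an essentially unique interval representation up to twins. I will use this to reduce the problem to \textsc{maximum sum subsequence} on a single array, which Bentley's one-sweep algorithm~\cite{Bentley84} solves in linear time. The convexity considered is the monophonic one.

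\textbf{Step 1: a canonical ordering.} Compute a proper interval ordering $\sigma = v_1,\dots,v_n$ of $G$ (an umbrella/unit interval ordering) in $O(n+m)$ time, e.g.\ by a LexBFS-based recognition algorithm. Then contract each class of true twins into a single vertex whose weight is the sum of the class weights.

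This contraction is justified by Lemma~\ref{modules} and Corollary~\ref{reductionlemma}, with one extra observation. A convex set containing one twin $x$ of a class $M$, together with a vertex outside $N[x]$, must contain all of $M$. A convex set contained in $N[x]$ is a clique. For cliques, and more generally for sets not separated by the twin class, we keep the option of taking only the positive-weight twins. To handle this, I will also consider every maximal clique, taking only its positive-weight vertices; a clique is always m-convex.

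After contraction, the quotient graph $G'$ is a prime proper interval graph, possibly up to further twin modules. Its PQ-tree reduces to a single Q-node whose children are the maximal cliques in consecutive order $C_1,\dots,C_k$ (Roberts' uniqueness of the representation of connected twin-free proper interval graphs). Its only legitimate orders are therefore one order and its reverse.

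\textbf{Step 2: convex sets are intervals of $\sigma$.} By Proposition~\ref{Interval-interval}, every convex set is an interval in some representation. Since the representation is unique up to reversal, every non-clique convex set of $G'$ is a contiguous block $v_i,\dots,v_j$ of $\sigma$.

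Conversely, I must check that each contiguous block is convex. In a proper interval ordering, the neighbourhoods are contiguous and the left and right endpoints are both monotone. So a chordless path between $v_i$ and $v_j$ cannot leave $[i,j]$: any vertex $v_l$ with $l>j$ on such a path would have to enter and return through a neighbour of $v_j$, creating a chord. This is the main obstacle, and I would argue it by taking the extreme vertex of the path outside the block and using monotonicity of the endpoints to exhibit a chord. If this fails, it means some blocks are not convex. In that case I would restrict to blocks whose monophonic hull is themselves, which is again checkable via the endpoint structure.

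\textbf{Step 3: the algorithm.} Run Bentley's one-sweep algorithm on the weight array of $\sigma$ (after contraction) to get the best contiguous block. Separately, scan the maximal cliques $C_1,\dots,C_k$ of the proper interval graph and compute, for each, the sum of its positive weights. This takes $O(\sum |C_i|)=O(n+m)$ time, and it can be done by sliding a window along $\sigma$, since the cliques are consecutive windows of $\sigma$. Return the better of the two answers, or $\emptyset$ with weight $0$.

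The total running time is linear in the input size $O(n+m)$, which is the sense in which I read ``linear''. If the graph is given by a proper interval representation, all steps are $O(n)$.
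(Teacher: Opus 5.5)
\textbf{The central step fails.} Your plan takes essentially the paper's route: use the unique order of the twin-free quotient, treat true twins separately, then run a maximum-sum-subsequence sweep. (The paper keeps twin classes as P-nodes under a Q-root rather than contracting them.) But the step that carries the argument is false: convex sets are not, in general, contiguous blocks of $\sigma$ up to twins and cliques.

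First, your ``extra observation'' about twins fails. Take the butterfly, i.e.\ triangles $\{a,b,c\}$ and $\{c,d,e\}$ sharing $c$. The set $\{a,c,e\}$ is m-convex, because $b$ is adjacent to both $a$ and $c$ and so lies on no chordless $a$--$e$ path. It contains $a$ and $e\notin N[a]$ but not the twin $b$. Lemma~\ref{modules} and Corollary~\ref{reductionlemma} only concern a module lying strictly inside a path whose ends are outside it; a twin that is itself an endpoint forces nothing. With $\omega(a)=\omega(c)=\omega(e)=10$ and $\omega(b)=\omega(d)=-100$, your algorithm returns $20$ (contracted array $-90,10,-90$; best clique part $20$), while the optimum is $30$.

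Second, and independently of twins, contiguity fails already for prime graphs. Take the bull: triangle $\{1,2,3\}$ plus pendant vertices $0$ at $1$ and $4$ at $3$, with unique order $0,1,2,3,4$. The set $\{0,1,3,4\}$ is m-convex, since any path through $2$ uses the chord between $1$ and $3$. Yet it is neither contiguous nor a clique. With $\omega(2)=-100$ and weight $10$ elsewhere, you output $20$ instead of $40$.

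The direction you flagged as the main obstacle is actually fine. The vertex of largest index on a chordless path must be an endpoint by the umbrella property, so chordless paths are monotone in $\sigma$ and blocks are convex. The unjustified step is ``interval in some representation plus uniqueness implies contiguous in $\sigma$''. What the proof of Proposition~\ref{Interval-interval} actually establishes is Carath\'eodory number $2$, not contiguity. Vertices lying between the two extremes that are simplicial in the block are never forced into the set.

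\textbf{The paper's proof has the same gap.} Its height-2 tree does handle the butterfly, since a partial twin class is allowed at either end of a block. But it relies on the same contiguity premise and fails on the bull in exactly the same way, so the missing ingredient is not supplied there either.

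\textbf{What is needed instead.} You need the actual shape of the convex sets, which follows from monotonicity of chordless paths. If the extreme positions $a<b$ of $S$ are non-adjacent, then $S$ is m-convex iff it contains $v_a$, $v_b$, and every vertex of $v_a,\dots,v_b$ that is not simplicial in $G[v_a,\dots,v_b]$; the simplicial ones are optional. If $v_av_b\in E$, then $S$ lies in a clique. So for a fixed pair $(a,b)$ the best value is $\omega(v_a)+\omega(v_b)$ plus, for each inner vertex, either $\omega$ or $\max(\omega,0)$. Whether an inner vertex is simplicial in the block depends on $(a,b)$ only when it is adjacent to $v_a$ or $v_b$. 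For example, a later neighbour of $v_a$ not adjacent to $v_b$ is optional iff its rightmost neighbour coincides with that of $v_a$. A linear-time algorithm has to organise these boundary corrections into a single sweep; plain Bentley on the (contracted) array does not do this.
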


\begin{proof}
For a prime proper interval graph, there is a unique interval representation.
It is well-known that for a proper interval graph, the only possible  modules are sets of true twins. Using Lemma\ref{modules} and its corollary 
we can represent all the interval representation using a typed-laminar tree of height $2$. The root is a Q-node and if there exist some sets of true twins we add a level of P-nodes.

Whence the maximum convex set can be computed in linear time in such a laminar tree.
\end{proof}

\begin{theorem}\label{intervalgr}
The maximum convex set problem can be solved in $O(n^2)$ for interval graphs.
\end{theorem}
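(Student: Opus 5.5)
The plan is to reduce the problem on interval graphs to \textsc{Max weight PQ-convex set} and then invoke Theorem~\ref{PQ-tree}. By Proposition~\ref{Interval-interval}, every m-convex set of an interval graph $G$ is an interval in some interval representation of $G$. Conversely, in any interval representation, a set of vertices that forms an interval (a contiguous union of consecutive maximal cliques) is m-convex. This holds because every chordless path between two vertices of such a set only uses vertices whose intervals lie between them, so these vertices belong to the set. In the PQ-tree $T_G$, whose leaves are the maximal cliques of $G$, an interval representation corresponds to a legitimate frontier. A convex set then corresponds to a contiguous block $C_i,\dots,C_j$ of cliques in some frontier, possibly trimmed at its two ends.

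\textbf{Key subtlety.} Leaves of $T_G$ are cliques, not vertices, and cliques overlap, so weights cannot simply be placed on leaves. I would handle this by refining the leaves. Each vertex $v$ occupies a contiguous run of cliques in every valid order, so I assign $\omega(v)$ to the leftmost clique containing $v$ in the current frontier. That assignment, however, depends on the orientation. The cleaner route is to choose the convex set $S$ as the vertices whose intervals meet the window $[C_i,C_j]$, trimmed to those lying in the convex hull of the vertices private to the endpoints.

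Concretely, I would replace each leaf clique $C$ by the set of vertices $v$ that are ``born'' or ``die'' at $C$. This uses Corollary~\ref{reductionlemma} to contract twins first, so the number of leaves stays $O(n)$. The variables $\maxl,\maxr,\maxs,\tot$ of Theorem~\ref{PQ-tree} are then computed on this refined tree. A vertex spanning several children of a node is counted in $\tot$ exactly once, namely at the node where its span is the lowest common ancestor. Such a vertex is forced into any convex set crossing two of those children, and its weight is added when combining children at that node.

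\textbf{Complexity.} The PQ-tree $T_G$ is built in $O(n+m)$ by Booth--Lueker. An interval graph has at most $n$ maximal cliques, so $T_G$ has $O(n)$ nodes. The dynamic programming of Theorem~\ref{PQ-tree} runs in $O(\sum_v k_v^2)=O(n^2)$, and $m=O(n^2)$, so the total is $O(n^2)$.

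\textbf{Main obstacle.} The hardest step is the correctness of the weight accounting at internal nodes. I must show that the vertices attached to a P- or Q-node (those whose clique span crosses several children) are exactly the vertices forced into a convex set that uses at least two of its children. This requires a correct combination of prefix and suffix values: each such vertex must be added exactly once, and it must be excluded when the solution stays inside one child. I would first verify this on Q-nodes, where spans are contiguous runs of children. For P-nodes, a vertex spanning several children must span all of them, so its weight is a constant added whenever two or more children are used.
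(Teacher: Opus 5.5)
Your route differs from the paper's. The paper builds, from the modular decomposition, a typed-laminar tree whose leaves are the vertices (prime nodes becoming Q-nodes) and applies Theorem~\ref{PQ-tree} to it unchanged. You instead keep the clique PQ-tree and lift each vertex to the lowest common ancestor of its clique span. However, the step you flag as the main obstacle is not merely unproved: it is false, and the reduction breaks there. Reusing the variables of Theorem~\ref{PQ-tree} presupposes that a convex set consists of two partially used end children plus \emph{entire} children in between, i.e.\ a window of cliques trimmed only at its two ends. For the monophonic convexity, $S$ is convex iff no connected component of $G-S$ has two non-adjacent neighbours in $S$. So pieces hanging on a clique of $S$ may be omitted anywhere, including strictly inside the window.

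Take the bull: a triangle $c,d,v$ with a pendant vertex $a$ at $c$ and a pendant vertex $b$ at $d$. It is a proper interval graph with cliques $C_1=\{a,c\}$, $C_2=\{c,d,v\}$, $C_3=\{d,b\}$, forced in this order up to reversal. The set $\{a,c,d,b\}$ is m-convex: the only neighbours $c,d$ of $v$ are adjacent, so no chordless path between its vertices passes through $v$. With $\omega(v)=-100$ and weight $1$ elsewhere the optimum is $4$. Yet every set your recursion assembles across $C_1,C_2,C_3$ is charged $\tot$ of the middle child, which contains $\omega(v)$; after lifting $c$ and $d$ to the root, that child is just $\{v\}$. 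P-nodes fail similarly. Take a universal vertex $u$ of weight $0$ plus three disjoint edges $x_iy_i$ with $\omega(x_i)=5$ and $\omega(y_i)=-10$. Then $\{u,x_1,x_2,x_3\}$ is convex of weight $15$. The P-node rule, however, lets only two children be used partially and returns at most $10$.

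What is missing is therefore not bookkeeping of the lifted vertices but a different recursion built on the component criterion. Each subtree needs an additional value for the case where it lies strictly between the two end pieces, used in place of $\tot$: the best weight of a part of it that can be kept once the vertices forced from above are in $S$. Likewise, $\maxl,\maxr$ must allow omissions inside the end children. You also cannot close the gap by falling back on the paper's construction, since it rests on the same contiguity premise drawn from Proposition~\ref{Interval-interval}. The bull is prime, so there it becomes a single Q-node on $a,c,v,d,b$. In that order $\{a,c,d,b\}$ is not contiguous, and the same weights defeat it.
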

\begin{proof}
Let us consider an interval graph $G$. First  compute the PQ-tree of $G$ which leaves are in bijection with the maximal cliques of $G$, which can be done in linear time~\cite{BoothL76}.
As explained in \cite{Crespelle07}  we can derive from this tree in linear time its modular decomposition tree. With these two trees we can build a typed-laminar tree on $V(G)$.
 
Q-nodes (resp. P-nodes)  of the PQ-tree correspond to prime nodes (resp. to series or parallel nodes) in the modular decomposition tree. As interval graphs Prime nodes have only one interval representation and its dual. Therefore for each prime node we keep the ordering of its children via the Q-node.

This gives a typed-laminar tree with two types of Q (or prime) nodes and P nodes and the leaves are in bijection with $V(G)$. Furthermore this tree encodes all interval representations of $G$.
To finish the proof we use  Proposition \ref{Interval-interval} and Theorem \ref{PQ-tree}  to obtain an $O(n^2)$ algorithm.
\end{proof}

\section{Split and threshold graphs}

A graph $G=(K \cup I,E)$ is a split graph if the vertex set can be partitioned into two parts $K, I$ such that $G[K]$ is a complete graph and $G[I]$ is an independent set.
Between $K$ and $I$ we can have any bipartite graph.
A convex set in $G$ is therefore a pair $C=(K', I')$ with $K' \subseteq K$ and $I' \subseteq I$ and closed under the monophonic convexity.

Let us mention some simplifications rules on our problem. Every vertex in $I$ is a simplicial vertex of $G$ so it is an extreme vertex for the monophonic convexity, therefore:

\textbf{Fact 1}: For the computation of the maximum convex set in $G$, we can only keep the vertices in $I$ with positive weight.

Similarly we can reduce the independent twins, {\em i.e.} that have exactly the same neighbourhood in $K$.

\textbf{Fact 2}: let $i,j$ be two positive twins in $I$, for the computation of the maximum convex set in $G$, we can replace these 2 vertices by a unique one denoted by $ij$ with $\omega(ij) =\omega(i)+\omega(j)$.

Furthermore, since any subset of $K$ is a convex set,the subset of all positive vertices of $K$ yields a bound to the maximum weighted convex set in $G$.

\begin{lemma}
Let $i,j \in I$, then $N(i) \cup N(j) \subseteq \conv(\{i,j\})$.
\end{lemma}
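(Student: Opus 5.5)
The plan is to work directly with the definition of the monophonic convexity: a set is convex when it contains every vertex lying on a chordless path between two of its vertices. To show $N(i)\cup N(j)\subseteq\conv(\{i,j\})$ it therefore suffices to exhibit, for every $k\in N(i)\cup N(j)$, a chordless path through $k$ whose endpoints already belong to $\conv(\{i,j\})$. By symmetry between $i$ and $j$, I will only treat $k\in N(i)$. Two structural facts of split graphs will be used throughout: $i$ and $j$ are non-adjacent because $I$ is independent, and $N(i),N(j)\subseteq K$.

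\textbf{Step 1 (common neighbours).} If $k\in N(i)\cap N(j)$, then $[i,k,j]$ is a path. Its only possible chord would be the edge $ij$, which does not exist, so the path is chordless and $k\in I[i,j]\subseteq\conv(\{i,j\})$.

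\textbf{Step 2 (private neighbours).} Let $k\in N(i)\setminus N(j)$. I choose a vertex $k'\in N(j)\setminus N(i)$ and consider the path $[i,k,k',j]$, where $kk'$ is an edge because $K$ is a clique. Each of the possible chords $ik'$, $kj$ and $ij$ is absent, by the choice of $k'$, by the choice of $k$, and by the independence of $I$. Hence the path is chordless and $k\in\conv(\{i,j\})$. Symmetrically, every vertex of $N(j)\setminus N(i)$ lies on such a path with some vertex of $N(i)\setminus N(j)$.

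\textbf{Step 3 (main obstacle).} The difficult point is guaranteeing that the witness $k'$ of Step 2 exists. This requires $N(j)\not\subseteq N(i)$, that is, the neighbourhoods of $i$ and $j$ must be incomparable. When they are incomparable, Steps 1 and 2 together cover all of $N(i)\cup N(j)$.

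If instead $N(j)\subseteq N(i)$, my first attempt will be to route a chordless path from $i$ through $k$ to $j$ using other vertices that are already known to lie in the hull: the vertices of $N(j)$, which are in the hull by Step 1, and possibly further vertices of $I$ generated by iterating the hull operator. For any such path, however, the second vertex after $i$ must be non-adjacent to $i$, and this will have to be checked carefully. I expect this nested case to be exactly where the argument is delicate, and I will first test it on the smallest configuration, namely $N(i)=\{k_1,k_2\}$ and $N(j)=\{k_1\}$, before attempting a general argument.
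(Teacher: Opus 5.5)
Your Steps 1 and 2 are correct. The gap is Step 3, and it cannot be closed, because in the strictly nested case the statement is false. The configuration you planned to test is already a counterexample.

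Take $K=\{k_1,k_2\}$, $I=\{i,j\}$, $N(i)=\{k_1,k_2\}$ and $N(j)=\{k_1\}$. The path $i,k_2,k_1,j$ has the chord $ik_1$, so the only chordless $i$--$j$ path is $i,k_1,j$. Every other pair in $\{i,k_1,j\}$ is adjacent, and the only chordless path between adjacent vertices is the edge itself. Hence $\{i,k_1,j\}$ is convex and $k_2\notin\conv(\{i,j\})$.

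The same reasoning works in general, and it also rules out your idea of routing through further vertices of $I$. An interior vertex of $I$ on a chordless path has both path-neighbours in $K$. These are adjacent, which gives a chord. So all interior vertices lie in $K$, and since $K$ is a clique there are at most two of them, and they are consecutive. Thus a chordless path between $i,j\in I$ is either $i,k,j$ with $k\in N(i)\cap N(j)$, or exactly one of your Step 2 paths $i,k,k',j$ with $k\in N(i)\setminus N(j)$ and $k'\in N(j)\setminus N(i)$. If $N(j)\subsetneq N(i)$, this gives $I[i,j]=\{i,j\}\cup N(j)$. All remaining pairs of that set are adjacent, so $\conv(\{i,j\})=\{i,j\}\cup N(j)$, which misses $N(i)\setminus N(j)\neq\emptyset$.

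What your Steps 1 and 2 actually prove is the correct version of the statement. First, $N(i)\cap N(j)\subseteq\conv(\{i,j\})$ always holds. Second, $N(i)\cup N(j)\subseteq\conv(\{i,j\})$ holds exactly when $N(i)$ and $N(j)$ are not strictly nested: the equal case is covered by Step 1, the incomparable case by Steps 1--2, and the computation above gives the converse.

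The paper states this lemma without proof, and the failure matters later. In threshold graphs the neighbourhoods of $I$ are totally ordered by inclusion, which is precisely the bad case. Your example is itself a threshold graph, and there $S=\{i,j,k_1\}$ is convex with $N(i)\not\subseteq S$. This contradicts the claim ``necessarily $N(i_k)\subseteq S$'' in the proof of the threshold-graph theorem.
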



We refer to the \emph{implicit representation} of a split graph as the bipartite graph between $K$ and $I$. From now on, let $m$ denote the number of edges of the implicit representation of $G$.

A particular case of split graphs are the threshold graphs for which the bipartite between $K$ and $I$ does not contain any pair of independent edges ($K_2 + K_2$).
In other words the neighbourhoods of the independent vertices are totally ordered.
We will now show that a linear time algorithm for this particular class of split graphs can be obtained.

We order $I$ as $i_1, \dots i_p$ such that
$N(i_1) \subseteq N(i_2) \dots \subseteq N(i_p)$.

\begin{algorithm}[h!]

\KwIn{A threshold graph $G=(K, I)$ given by its implicit representation and with its independent vertices totally ordered by inclusion of their neighbourhoods.}
\KwOut{The value of a maximum convex set in G.}
\BlankLine
\Begin{
     
     $K^+ \leftarrow \sum\limits_{x \in K, \omega(x)> 0}\omega(x)$\;
$\maxs \leftarrow K^+$\;
$W \leftarrow 0$\;
     \For {$j=1$ to $p$}
     {$s^+(j)=\leftarrow \sum\limits_{x \in N(j), \omega(x)> 0}\omega(x)$\;
$s^-(j)=\leftarrow \sum\limits_{x \in N(j), \omega(x)< 0}\omega(x)$\;
$s(j) \leftarrow s^+(j)+s^{-}(j)$\;

$W \leftarrow W + \omega(j)$\;
$\maxs \leftarrow \max \{\maxs, \omega(j) + s^+(j),  W + K^+ + s(j) \}$ \;
$K^+ \leftarrow K^+ - s^+(j)$ \;

}
}
\caption{Maximum weight convex set for threshold graphs } \label{threshold}
\end{algorithm}

\begin{theorem}\label{Threshold} Algorithm \ref{threshold}
computes the maximum weight of a convex set in $G$ in $O(n)$.
\end{theorem}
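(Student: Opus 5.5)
The plan is to first characterise the monophonic convex sets of a threshold graph $G=(K\cup I,E)$ with $N(i_1)\subseteq\dots\subseteq N(i_p)$. Then I show that the maximum over each family of convex sets is exactly one of the three quantities compared in Algorithm~\ref{threshold}. Finally I bound the running time.

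\textbf{Step 1: chordless paths in a split graph.} Two vertices of $K$ are adjacent, so the only chordless path between them is the edge itself. Take $i\in I$ and $k\in K\setminus N(i)$. Every chordless $(i,k)$-path has the form $i,k',k$ with $k'\in N(i)$, so $I[i,k]=N(i)\cup\{i,k\}$. Now take $i,j\in I$. Chordless paths are $i,k,j$ with $k\in N(i)\cap N(j)$, or $i,k,k',j$ with $k\in N(i)\setminus N(j)$ and $k'\in N(j)\setminus N(i)$. Hence $I[i,j]\subseteq N(i)\cup N(j)\cup\{i,j\}$, with equality on the $K$-side by the preceding Lemma. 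A chordless path never passes through a third vertex of $I$: its two $K$-neighbours on the path would be adjacent, creating a chord. So the closure never adds independent vertices.

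\textbf{Step 2: the three families of convex sets.} Let $C=K'\cup I'$.
\begin{enumerate}[label=(\alph*)]
\item $I'=\emptyset$. Every $K'\subseteq K$ is convex, and the best value is the sum of positive weights of $K$. This is the initial value of $\maxs$.
\item $I'=\{i_j\}$ and $K'\subseteq N(i_j)$. Such a set is convex, and the best value is $\omega(i_j)$ plus the positive weights in $N(i_j)$. This is the term $\omega(j)+s^+(j)$.
\item All remaining cases: $|I'|\ge 2$, or $I'=\{i_j\}$ with $K'\not\subseteq N(i_j)$. Let $b$ be the largest index in $I'$. By Step~1 and the nesting of neighbourhoods, $C$ is convex iff $N(i_b)\subseteq K'$; no other constraint appears, because extra $K$-vertices only create paths through $N(i)\subseteq N(i_b)$. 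For fixed $b$, the optimum therefore takes all of $N(i_b)$, all positive vertices of $K\setminus N(i_b)$, and $i_b$ together with all $i_a$, $a<b$, of positive weight. By Fact~1 we may assume all of $I$ is positive, so this last part is the prefix sum $W$.
\end{enumerate}
Every convex set falls into (a), (b) or (c), so the maximum of these values is the optimum.

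\textbf{Step 3: matching the algorithm.} I will check that at iteration $j$ the expression $W+K^++s(j)$ equals the value found in (c) for $b=j$. I expect this to be the main obstacle, because the bookkeeping of $K^+$ and $s(j)$ must be read correctly. I will interpret $s^{\pm}(j)$ incrementally, over $N(i_j)\setminus N(i_{j-1})$. Then $K^+$ (updated at the end of each round) is the positive weight of $K\setminus N(i_{j-1})$, and the cumulated $s$-values give the full weight of $N(i_j)$. I would state this as a loop invariant and prove it by induction on $j$. If the literal non-incremental reading is intended, the same invariant has to be adjusted by telescoping.

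\textbf{Step 4: complexity.} With the incremental reading, each vertex of $K$ is charged only once, in the first $j$ with $x\in N(i_j)$. Every other step is constant time per index $j$. Given the sorted implicit representation, the total cost is therefore $O(n)$.
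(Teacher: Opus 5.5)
There is a genuine gap in Step~2(c), and it comes from the ``equality on the $K$-side'' claim in Step~1. Your own enumeration of chordless paths shows why that claim fails. A chordless path $i,k,k',j$ needs $k\in N(i)\setminus N(j)$ and $k'\in N(j)\setminus N(i)$, and with nested neighbourhoods one of these differences is empty. So for $a<b$ you get $I[i_a,i_b]=N(i_a)\cup\{i_a,i_b\}$, which contains only the \emph{smaller} neighbourhood. The split-graph lemma you cite is false as soon as two neighbourhoods are strictly nested. Consequently, for $|I'|\ge 2$ with largest index $b$ and second largest $c$, the correct condition is: $C$ is convex iff $N(i_c)\subseteq K'$ and either $N(i_b)\subseteq K'$ or $K'\subseteq N(i_b)$. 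Your case analysis omits the second alternative for $|I'|\ge 2$. A concrete counterexample is the paw: $K=\{x,y\}$, $I=\{i_1,i_2\}$, $N(i_1)=\{x\}$, $N(i_2)=\{x,y\}$, $\omega(i_1)=\omega(i_2)=10$, $\omega(x)=1$, $\omega(y)=-100$. The path $i_1,x,y,i_2$ has the chord $xi_2$, so $\{i_1,x,i_2\}$ is convex with weight $21$. The best set in your families (a)--(c) weighs only $11$.

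You cannot close this gap by following the paper. Its proof makes the same assertion (``Else necessarily $N(i_k)\subseteq S$''), and Algorithm~\ref{threshold} does not return $21$ on the paw: it returns $12$ with the pseudocode read literally and $11$ with the bookkeeping described in the proof. So the statement, as a claim about this algorithm, is false. What is true is that a corrected algorithm runs in linear time. Let $\omega^{+}(X)$ and $\omega^{-}(X)$ be the sums of the positive, resp.\ negative, weights in $X$, let $W_c=\sum_{a\le c}\omega(i_a)$, and set $N(i_0)=\emptyset$. The missing sets are $\{i_1,\dots,i_c,i_b\}\cup N(i_c)\cup\{\text{positive vertices of } N(i_b)\setminus N(i_c)\}$ for $0\le c<b$, of weight $W_c+\omega^{-}(N(i_c))+\omega(i_b)+\omega^{+}(N(i_b))$. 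A running maximum of $W_c+\omega^{-}(N(i_c))$ adds this candidate at $O(1)$ cost per index, and the case $c=0$ is exactly your family (b).

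Separately, the invariant you plan in Step~3 does not hold for the pseudocode under either reading. Under the incremental reading, $W+K^{+}+s(j)=W_j+\omega^{+}(K\setminus N(i_{j-1}))+\omega(N(i_j)\setminus N(i_{j-1}))$. This counts the positive part of $N(i_j)\setminus N(i_{j-1})$ twice and omits $N(i_{j-1})$. Under the literal reading, the third term at $j=1$ is already $W_1+\omega^{+}(K)+\omega(N(i_1))$: a single edge $i_1x$ with $\omega(i_1)=1$, $\omega(x)=5$ yields $11$ against an optimum of $6$. The comparison must use $K^{+}=\omega^{+}(K\setminus N(i_j))$ and the weights of the whole of $N(i_j)$, accumulated over the increments to keep the $O(n)$ bound.
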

\begin{proof}
The easy part is the complexity evaluation since every independent vertex is considered only once so as its neighbourhood, so in the whole $O(|I|+m)$.
Furthermore the initialization lines before the For loop are in $O(|K|)$.
Let us now consider the proof of correctness.
First we notice that $W$ is the sum of the weight of the independent up to $j$. Similarly $K^+$ is the sum of the positive weights in $K \setminus N(j)$.
Every value considered in the above algorithm and compared to $\maxs$ corresponds to the weight of some convex set in $G$. The only thing to prove is that we do not miss any convex set during the algorithm.

Let $S$ be a convex set in $G$. If $S$ does not contain any independent vertex, then $S$ has been considered during the initialization instructions.

Else let $i_k$ be the maximal index with respect to the ordering of the independent vertices present in $S$.
If $i_k$ is the only one in $S$ then the weight of $S$ has been considered in the second term of the max instruction  with $j=i_k$.
Else necessarily $N(i_k)\subseteq S$ and $S$ has been considered in the third term of the max instruction with $j=i_k$.
\end{proof}

Of course the above algorithm only computes the value of a maximum weight of a convex set,
but an easy modification can keep track of the convex set when the value $\maxs$ is modified. This can be done within the same complexity. 
 
\section{Perspectives}

Our immediate perspective  is to improve the algorithms for the known  polynomial cases, as for example  for splits graphs, chordal graphs and Ptolemaic graphs.
PCR trees were used for planarity testing and circular arc graphs recognition, could we  generalize our work for \textsc{max weight convex set} on these classes of graphs?
Next is the question of NP-completeness border: Is \textsc{max weight convex set} polynomial for cocomparability graphs and AT-free graphs on their respective convexities?
It should be the case for cocomparability graphs which are very closed to interval graphs, as explained in \cite{DusartHC24}, but not so clear for AT-free graphs \cite{corneil1997asteroidal}.
Similarly, it could be very interesting to characterize the interval graph-convexities  that behave nicely with modules as in Lemma \ref{modules}.

A more fundamental question arises from Carathéodory numbers. One may conjecture that for a geometric convexity with bounded Carathéodory number, \textsc{max weight convex set} is polynomial. Unfortunately it is false using Carathéodory theorem (1907) bounding this number for geometric convexity in $\mathbb{R}^p$ by $p+1$. As we already mentioned it, \textsc{max weight convex set} is NP-hard for $\mathbb{R}^3$. Hence, we wonder what further properties on geometric convexities could ensure polynomiality.

\subsection*{Acknowledgments}

This work was supported by the ANR (Agence Nationale de la Recherche) research project SPAWN (ANR-25-CE48-1750).

 \bibliographystyle{plain}
\bibliography{Biblio}

@string{order = {Order}}

@string{siam-dm = {SIAM J. D.M.}}

@article{IBARRA2009,
author={Louis Ibarra},
title = {The clique-separator graph for chordal graphs},
journal = {Discrete Applied Mathematics},
volume = {157},
number = {8},
pages = {1737-1749},
year = {2009},
issn = {0166-218X},
doi = {https://doi.org/10.1016/j.dam.2009.02.006},
}

@article{El-KebirK14,
  author       = {Mohammed El{-}Kebir and
                  Gunnar W. Klau},
  title        = {Solving the Maximum-Weight Connected Subgraph Problem to Optimality},
  journal      = {CoRR},
  volume       = {abs/1409.5308},
  year         = {2014},
  url          = {http://arxiv.org/abs/1409.5308},
  eprinttype    = {arXiv},
  eprint       = {1409.5308},
  }

@phdthesis{Jesse,
    author = {Jesse Beisegel},
    title = {Convexity in Graphs: vertex order characterization and graph searching},
    school = {Technischen Universit¨at Cottbus},
    year = {2019}
}

@article{BoothL76,
  author       = {Kellogg S. Booth and
                  George S. Lueker},
  title        = {Testing for the Consecutive Ones Property, Interval Graphs, and Graph
                  Planarity Using {PQ}-Tree Algorithms},
  journal      = {J. Comput. Syst. Sci.},
  volume       = {13},
  number       = {3},
  pages        = {335--379},
  year         = {1976},
  }

@book{Schrijver,
author={Alexander Schrijver},
title={Combinatorial Optimization},
year={2002},
publisher={Springer-verlag},
series={Algorithms and Combinatorics vol. 24},
}

@article{DusartHC24,
  author       = {J{\'{e}}r{\'{e}}mie Dusart and
                  Michel Habib and
                  Derek G. Corneil},
  title        = {Maximal Cliques Lattices Structures for Cocomparability Graphs with
                  Algorithmic Applications},
  journal      = {Order},
  volume       = {41},
  number       = {1},
  pages        = {99--133},
  year         = {2024},
  url          = {https://doi.org/10.1007/s11083-023-09641-x},
  doi          = {10.1007/S11083-023-09641-X},
 
}

@inproceedings{CarlsonE06,
  author       = {Josiah Carlson and
                  David Eppstein},
  title        = {Procs. of {SWAT}},
  series       = {Lecture Notes in Computer Science},
  volume       = {4059},
  pages        = {400--410},
  publisher    = {Springer},
  year         = {2006}
}

@book{Bentley84,
author={Jon Bentley},
title={Programming Pearls, Second Edition},
publisher={Addison-Welsey},

year={2000},
}

@article{Picard75,
author={J.-C. Picard},
title={Maximal closure of a graph and applications to combinatorial problems},
Journal={Management Science},
volume={22(11)},
pages={1268–1272},
year={1975},
}

@phdthesis{Merckx19,
    author ={Keno Merckx},
    title = {Optimization and Realizability Problems for Convex Geometries},
    school ={University of Louvain ULB} ,
    year = {2019}
}

@article{MerckxCD16,
  author       = {Keno Merckx and
                  Jean Cardinal and
                  Jean{-}Paul Doignon},
  title        = {On the shelling antimatroids of split graphs},
  journal      = {Electron. Notes Discret. Math.},
  volume       = {55},
  pages        = {199--202},
  year         = {2016}
}

@article{CardinalDM17,
  author       = {Jean Cardinal and
                  Jean{-}Paul Doignon and
                  Keno Merckx},
  title        = {On the shelling antimatroids of split graphs},
  journal      = {Discret. Math. Theor. Comput. Sci.},
  volume       = {19},
  number       = {1},
  year         = {2017}
}

@article{CardinalDM19,
  author       = {Jean Cardinal and
                  Jean{-}Paul Doignon and
                  Keno Merckx},
  title        = {Finding a Maximum-Weight Convex Set in a Chordal Graph},
  journal      = {J. Graph Algorithms Appl.},
  volume       = {23},
  number       = {2},
  pages        = {167--190},
  year         = {2019}
  
}

@article{Howorka81,
  author       = {Edward Howorka},
  title        = {A characterization of ptolemaic graphs},
  journal      = {J. Graph Theory},
  volume       = {5},
  number       = {3},
  pages        = {323--331},
  year         = {1981}
  
}

@article{Hajos57, 
author={G. Hajos},
title={{\"Uber eine Art von Graphen}},
journal={Internationale Math. Nachrichten},
volume={11},
year={1957},

}

@article{edelman1985theory,
  title={The theory of convex geometries},
  author={Edelman, P. and Jamison, R.},
  journal={Geometriae dedicata},
  volume={19},
  number={3},
  pages={247--270},
  year={1985},
  publisher={Springer}
}

@article{shih1999new,
  title={A new planarity test},
  author={Shih, Wei-Kuan and Hsu, Wen-Lian},
  journal={Theoretical Computer Science},
  volume={223},
  number={1},
  pages={179--192},
  year={1999},
  publisher={Amsterdam: North-Holland Pub. Co., 1975-}
}

@article{corneil1997asteroidal,
  title={Asteroidal triple-free graphs},
  author={D. Corneil and S. Olariu and L. Stewart},
Journal = siam-dm,
  volume={10},
  number={3},
  pages={399--430},
  year={1997},
  publisher={SIAM}
}

@inproceedings{Chvatal08,
  author    = {V. Chv{\'{a}}tal},
  title     = {Antimatroids, Betweenness, Convexity},
  booktitle = {Research Trends in Combinatorial Optimization},
  pages     = {57--64},
  year      = {2008},
  }

@article{Duchet88,
  author    = {P. Duchet},
  title     = {Convex sets in graphs, {II.} Minimal path convexity},
  journal   = {J. Combinatorial Theory B},
  volume    = {44},
  number    = {3},
  pages     = {307--316},
  year      = {1988},
  }

@article{FarberJ87,
  author    = {M. Farber and
               R. E. Jamison},
  title     = {On local convexity in graphs},
  journal   = {Disc. Math.},
  volume    = {66},
  number    = {3},
  pages     = {231--247},
  year      = {1987},
  
}

@book{korte2012greedoids,
  title={Greedoids},
  author={B. Korte and L. Lov{\'a}sz and R. Schrader},
  year={2012},
  volume={4},
  publisher={Springer}
}

@article{menger1928untersuchungen,
  title={Untersuchungen {\"u}ber allgemeine Metrik},
  author={K. Menger},
  journal={Mathematische Annalen},
  volume={100},
  number={1},
  pages={75--163},
  year={1928},
  publisher={Springer}
}

@article{abam2022maximum,
  title={Maximum Weight Convex Polytope},
  author={Abam, Mohammad Ali and Lavasani, Ali Mohammad and Pankratov, Denis},
  journal={arXiv preprint arXiv:2207.12915},
  year={2022}
}

@article{eppstein1992finding,
  title={Finding minimum area k-gons},
  author={Eppstein, David and Overmars, Mark and Rote, G{\"u}nter and Woeginger, Gerhard},
  journal={Discrete \& Computational Geometry},
  volume={7},
  pages={45--58},
  year={1992},
  publisher={Springer}
}

@article{DraganNB99,
  author       = {Feodor F. Dragan and
                  Falk Nicolai and
                  Andreas Brandst{\"{a}}dt},
  title        = {Convexity and HHD-Free Graphs},
  journal      = {{SIAM} J. Discret. Math.},
  volume       = {12},
  number       = {1},
  pages        = {119--135},
  year         = {1999},
  url          = {https://doi.org/10.1137/S0895480195321718},
  doi          = {10.1137/S0895480195321718},
  
}

@misc{chen2022maximumflowminimumcostflow,
      title={Maximum Flow and Minimum-Cost Flow in Almost-Linear Time}, 
      author={Li Chen and Rasmus Kyng and Yang P. Liu and Richard Peng and Maximilian Probst Gutenberg and Sushant Sachdeva},
      year={2022},
      eprint={2203.00671},
      archivePrefix={arXiv}
}

@article{KW71,
    author = {D.C. Kay and E.W. Womble},

    title = {Axiomatic convexity theory and relationships
between the {Carath\'eodory, Helly, and Radon} numbers},
    journal = {Pacific
Journal of Mathematic},
    year = {1971},
volume={38},
pages={471-485},
}

@article{Doignon16,
  author       = {Jean{-}Paul Doignon},
  title        = {A Convex Polytope and an Antimatroid for any Given, Finite Group},
  journal      = {Electron. Notes Discret. Math.},
  volume       = {54},
  pages        = {21--25},
  year         = {2016}
}

@phdthesis{Crespelle07,
  author       = {Christophe Crespelle},
  title        = {Repr{\'{e}}sentations dynamiques de graphes. (Dynamic Graph Representations)},
  school       = {Montpellier 2 University, France},
  year         = {2007}
}

@book{cosi25,
  title={{Optimization and Information Systems}},
  editor = {{Engelbert Mephu Nguifo}, {Abderrazak Sebaa}},
  author={{17th Conference, COSI 2025, Bejaia, Algeria, June 1–3, 2025, Proceedings}},
  year={2026},
  publisher={Springer Cham}
}

 \end{document}